\title{Optimized Uplink Transmission in Multi-Antenna C-RAN with Spatial Compression and Forward
\footnote {L. Liu is with the Department of Electrical
and Computer Engineering, National University of Singapore
(e-mail:liu\_liang@nus.edu.sg).}\footnote{R. Zhang is with the
Department of Electrical and Computer Engineering, National
University of Singapore (e-mail:elezhang@nus.edu.sg). He is also
with the Institute for Infocomm Research, A*STAR, Singapore.}}
\author{Liang Liu and Rui Zhang}
\begin{document}
\maketitle \thispagestyle{empty} \vspace{-0.3in}

\begin{abstract}
Massive multiple-input multiple-output (MIMO) and cloud radio access network (C-RAN) are two promising techniques for implementing future wireless communication systems, where a large number of antennas are deployed either being co-located at the base station (BS) or totally distributed at separate sites called remote radio heads (RRHs) to achieve enormous spectrum efficiency and energy efficiency gains. In this paper, we consider a general antenna deployment design for wireless networks, termed multi-antenna C-RAN, where a flexible number of antennas can be equipped at each RRH to more effectively balance the performance and fronthaul complexity trade-off beyond the conventional massive MIMO and single-antenna C-RAN. In order to coordinate and control the fronthaul traffic over multi-antenna RRHs, under the uplink communication setup, we propose a new ``spatial-compression-and-forward (SCF)'' scheme, where each RRH first performs a linear spatial filtering to denoise and maximally compress its received signals from multiple users to a reduced number of dimensions, then conducts uniform scalar quantization over each of the resulting dimensions in parallel, and finally sends the total quantized bits to the baseband unit (BBU) via a finite-rate fronthaul link for joint information decoding. Under this scheme, we maximize the minimum signal-to-interference-plus-noise ratio (SINR) of all users at the BBU by a joint resource allocation over the wireless transmission and fronthaul links. Specifically, each RRH determines its own spatial filtering solution in a distributed manner to reduce the signalling overhead with the BBU, while the BBU jointly optimizes the users' transmit power, the RRHs' fronthaul bits allocation, and the BBU's receive beamforming with fixed spatial filters at individual RRHs. Through numerical results, it is shown that given a total number of antennas to be deployed, multi-antenna C-RAN with the proposed SCF and joint optimization significantly outperforms both massive MIMO and single-antenna C-RAN under practical fronthaul capacity constraints.
\end{abstract}

\begin{keywords}
Multi-antenna cloud radio access network (C-RAN), massive multiple-input multiple-output (MIMO), fronthaul constraint, spatial-compression-and-forward (SCF), power control, beamforming, signal-to-interference-plus-noise ratio (SINR) maximization.
\end{keywords}

\setlength{\baselineskip}{1.3\baselineskip}
\newtheorem{definition}{\underline{Definition}}[section]
\newtheorem{fact}{Fact}
\newtheorem{assumption}{Assumption}
\newtheorem{theorem}{\underline{Theorem}}[section]
\newtheorem{lemma}{\underline{Lemma}}[section]
\newtheorem{corollary}{\underline{Corollary}}[section]
\newtheorem{proposition}{\underline{Proposition}}[section]
\newtheorem{example}{\underline{Example}}[section]
\newtheorem{remark}{\underline{Remark}}[section]
\newtheorem{algorithm}{\underline{Algorithm}}[section]
\newcommand{\mv}[1]{\mbox{\boldmath{$ #1 $}}}

\section{Introduction}\label{sec:Introduction}
%It is anticipated that the mobile data traffic will grow 1000 times higher from 2010-2020 with a rate of roughly a factor of two per year \cite{3GPP}. As a result, the 5G cellular system on the roadmap has to accommodate this dramatic explosion in future demands for data communications. As a promising candidate for the future 5G standard, cloud radio access network (C-RAN) enables a centralized processing architecture, using multiple relay-like base stations (BSs), named remote radio heads (RRHs), to serve mobile users cooperatively under the coordination of a central unit (CU) \cite{ChinaMobile}. An illustration of the C-RAN is shown in Fig. \ref{fig8}, where the RRHs are connected to a CU that is further connected to the core network via high-speed fiber fronthaul and backhaul links, respectively. Unlike the BSs in the current cellular network which directly decode or encode the user messages locally, each RRH merely forwards the signals to/from the CU via its fronthaul link, while leaving the joint encoding/decoding complexity to a baseband unit (BBU) in the CU. As a result, the relay-like RRHs can be low-cost devices and densely deployed in the network to serve the users with much better channel conditions over current cellular network with only one BS in each cell. Thanks to the above features of centralization (joint encoding/decoding at BBU) and densification (deployment of RRHs), C-RAN is expected to achieve enormous throughput improvement over the 4G communication systems.

\begin{figure}
\begin{center}
\subfigure[Massive MIMO]
{\scalebox{0.5}{\includegraphics*{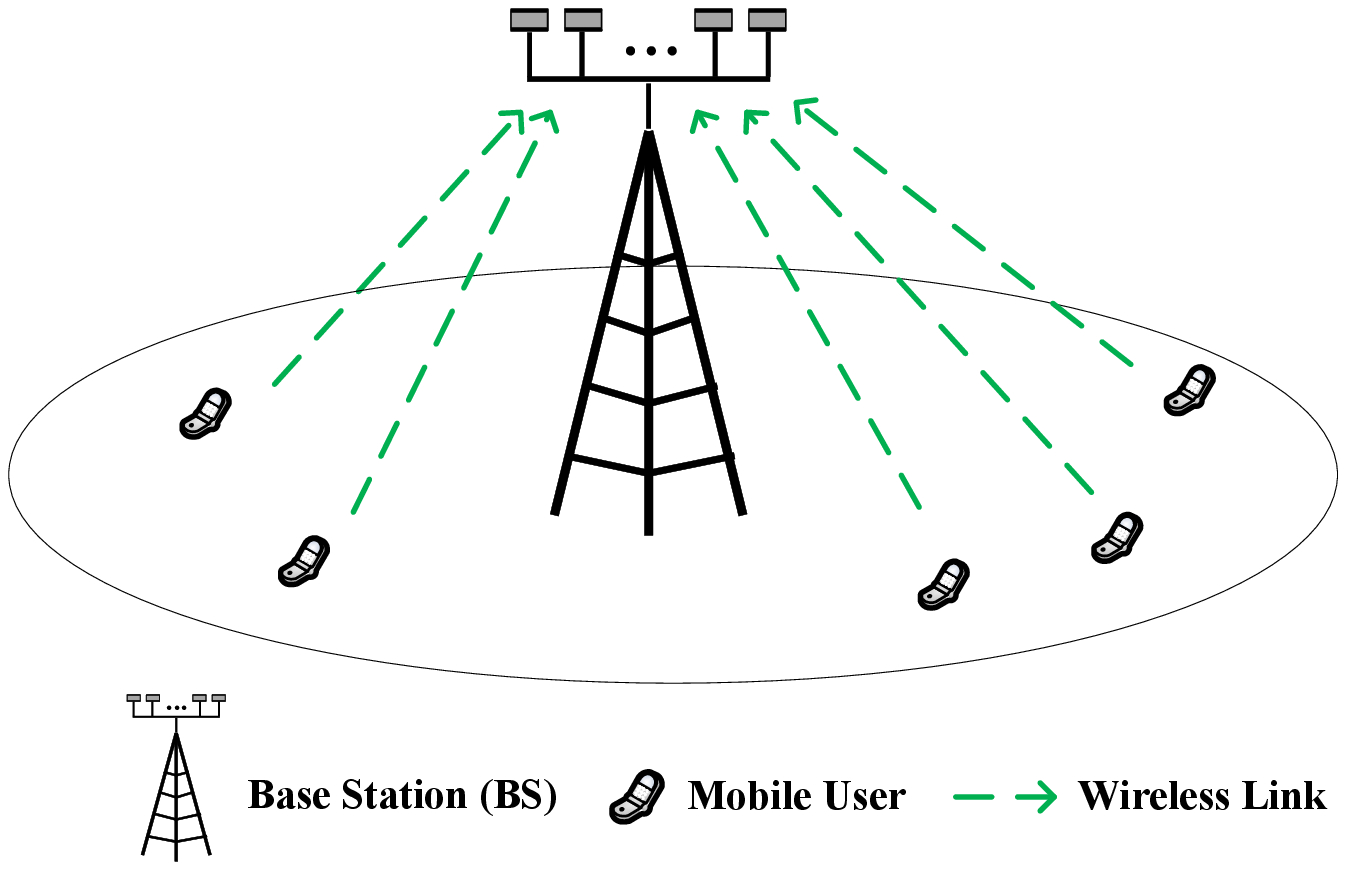}}} \\
\subfigure[Single-antenna C-RAN]
{\scalebox{0.5}{\includegraphics*{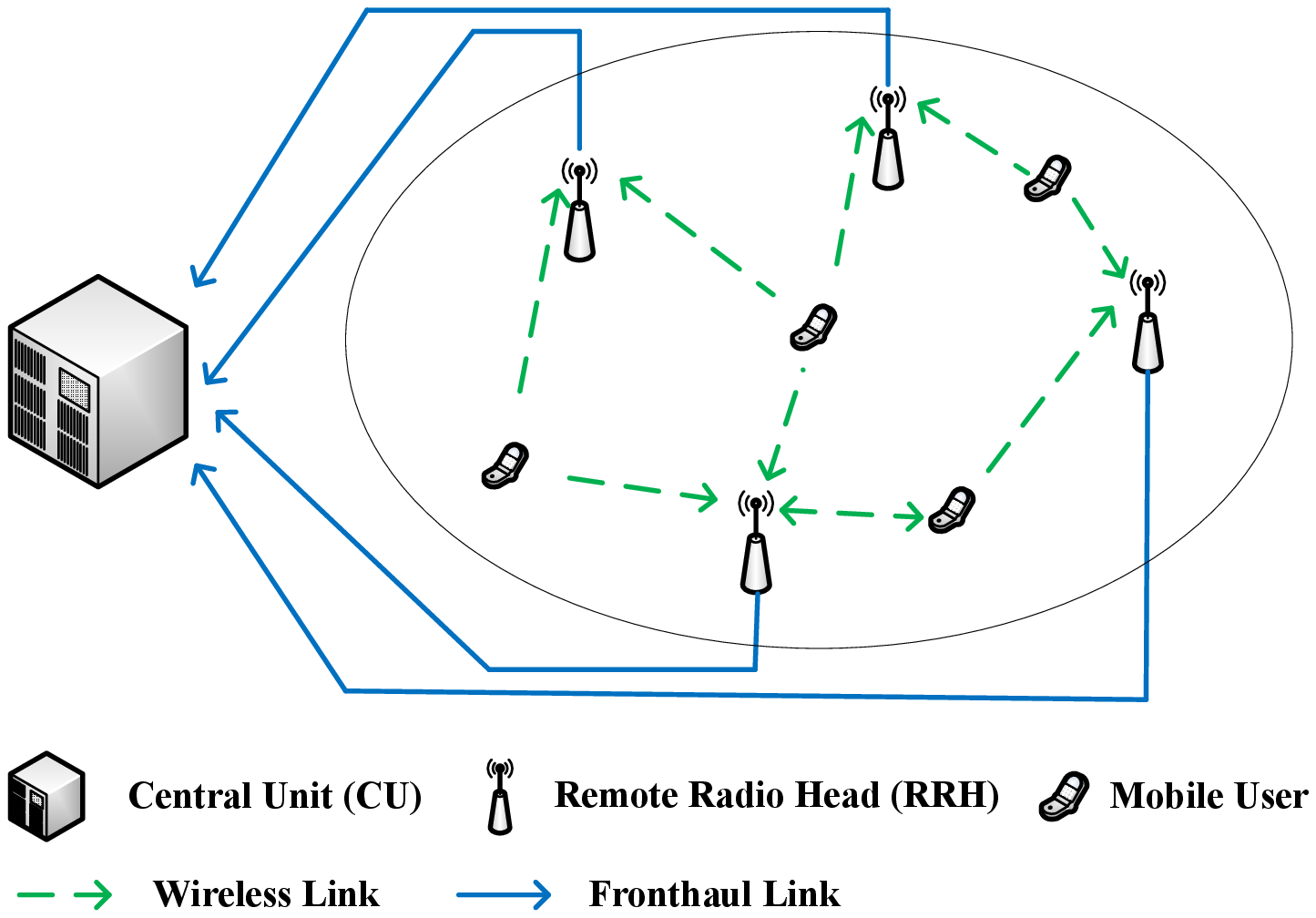}}} \ \ \ \ \
\subfigure[Multi-antenna C-RAN]
{\scalebox{0.5}{\includegraphics*{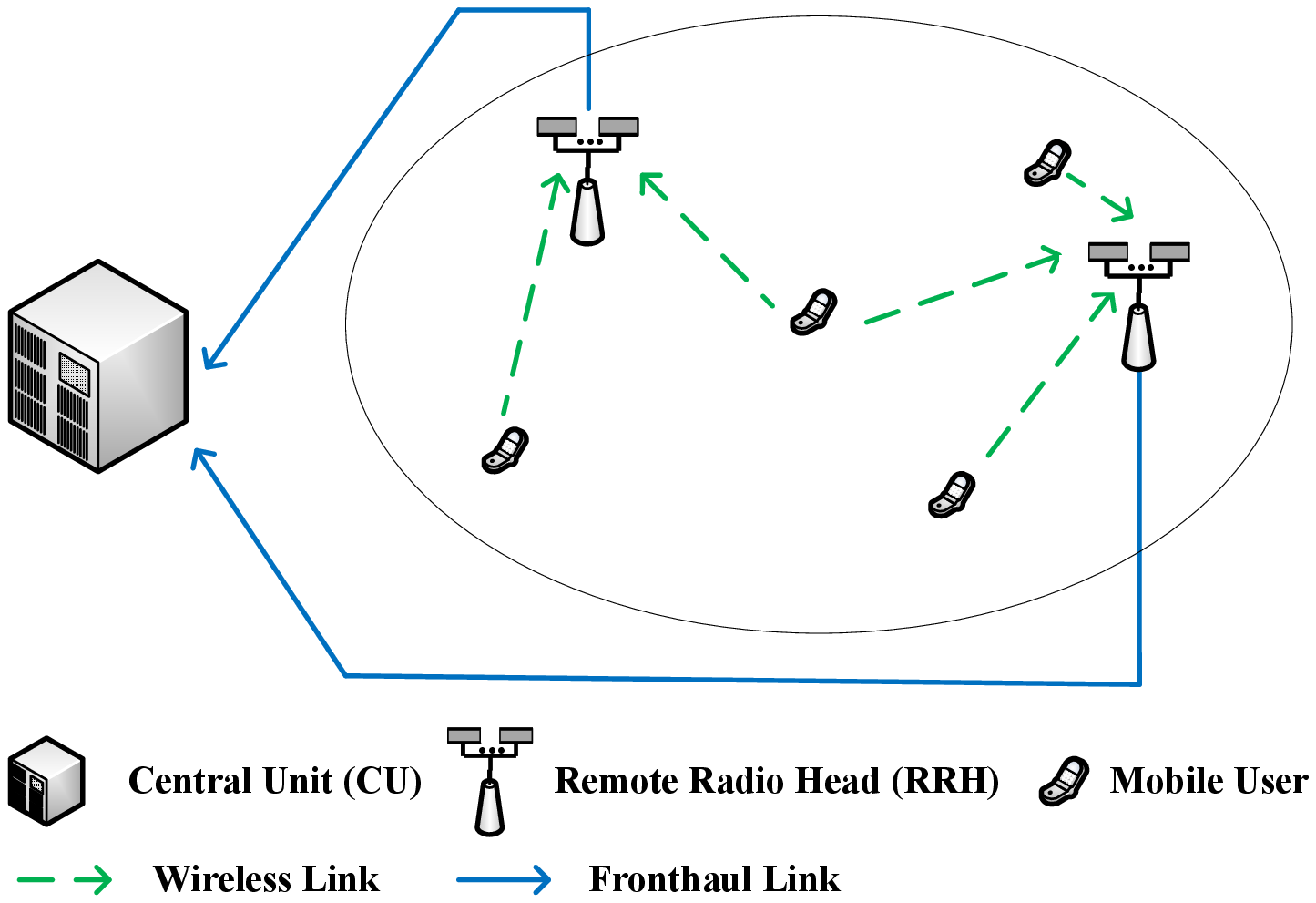}}}
\end{center}
\caption{Comparison of massive MIMO, single-antenna C-RAN, and multi-antenna C-RAN in the uplink transmission.}\label{fig8}
\end{figure}

It is anticipated that the mobile data traffic will grow 1000 times higher in the next decade with a rate of roughly a factor of two per year \cite{3GPP}. As a result, the fifth-generation or 5G wireless system on the roadmap has to make a major paradigm shift to accommodate this dramatic explosion in future demands for wireless data communications. Both as highly promising candidate techniques for the 5G wireless network, massive multiple-input multiple-output (MIMO) \cite{Marzetta10,Rui14} and cloud radio access network (C-RAN) \cite{ChinaMobile} have recently drawn significant attentions (see Fig. \ref{fig8}). Both techniques advocate the use of very large number of antennas and centralized signal processing to achieve enormously improved efficiency in spectrum and energy usage. However, they are practically implemented with different performance and complexity trade-off considerations.

In a massive MIMO system, as depicted in Fig. \ref{fig8}(a), a large number of antennas are deployed at the base station (BS) to reap all the benefits of the conventional multi-user MIMO system but with a much greater scale. It is shown in \cite{Marzetta10} that in the asymptotic regime where the number of antennas at the BS is much larger than that of the users, the channels of different users are orthogonal with each other, and thus a simple matched filter, i.e., maximal ratio combining (MRC) in the uplink and maximal ratio transmission (MRT) in the downlink, is capacity-achieving. Moreover, the required transmit energy per bit vanishes as the number of antennas goes to infinity \cite{Erric13}. Despite of the above appealing benefits, there are many issues that potentially limit the performance of massive MIMO. For instance, due to the limited space for installation, the number of antennas at the BS is finite and their channels are generally correlated; as a result, the theoretical performance limit in the asymptotic regime cannot be fully achieved in practice with simple MRC/MRT. Moreover, similar to conventional MIMO system, the performance of the cell-edge users in the massive MIMO system is still a bottleneck since their channel conditions may be weak due to more significant path attenuation.

On the other hand, in a single-antenna C-RAN as shown in Fig. \ref{fig8}(b), each antenna is deployed at one separate site called remote radio head (RRH), which is connected to one central unit (CU) via a high-speed fronthaul link (fiber or wireless) for joint information processing. Unlike the BS in the massive MIMO system which decodes or encodes the user messages locally, in C-RAN each RRH merely forwards the signals to/from the CU via its fronthaul link, while leaving the joint encoding/decoding complexity to a baseband unit (BBU) in the CU. As a result, if the data can be perfectly transmitted between the BBU and RRHs over the fronthaul links without any error, the single-antenna C-RAN can be viewed as another form of massive MIMO but with the antennas densely distributed over the whole served region such that all the mobile users can be served by some adjacent antennas with strong channel conditions. However, one obstacle constraining the practically achievable throughput of C-RAN is the limited-capacity fronthaul link between each RRH and BBU. According to \cite{ChinaMobile}, the capacity of the current commercial fibers, which is about several Gbps (gigabits per second), can be easily overwhelmed in C-RAN even under moderate data traffic; whereas wireless fronthaul techniques such as those exploiting the millimeter-wave bands in general operate with even lower data rate per link and shorter distance as compared to the fiber based solution. In the literature, a considerable amount of effort has been dedicated to study fronthaul quantization/compression techniques in the uplink communication \cite{Steinberg09}-\cite{Yu13}. Specifically, the ``quantize-and-forward (QF)'' scheme \cite{Tse11}-\cite{Kramer08} is adopted to reduce the communication rates between the BBU and RRHs, where each single-antenna RRH samples, quantizes and forwards its received signals to the BBU over its fronthaul link. Moreover, the QF scheme is usually studied under an information-theoretical Gaussian test channel model, and the quantization noise levels of all the RRHs are jointly optimized to maximize the end-to-end throughput subject to the capacity constraints of individual fronthaul links. Recently, \cite{Liu14} reveals that subject to the capacity of the current commercial fibers, the above theoretical performance upper bound can be approached by applying a simple uniform scalar quantization at each RRH for orthogonal frequency division multiplexing (OFDM) based single-antenna C-RAN. Moreover, another line of research in C-RAN studies the control of the overall fronthaul traffic by selectively activating only a small subset of all the RRHs to meet the user demands \cite{RuiZhang}.

In the aforementioned massive MIMO system and single-antenna C-RAN, the antennas are deployed in a totally co-located and separated manner, respectively. In addition to these two extreme cases, a more general solution is to distribute the antennas at a reduced number of sites as compared to single-antenna C-RAN, while the RRH at each site is equipped with multiple antennas. This so-called multi-antenna C-RAN design can provide both improved channel conditions to the users as in single-antenna C-RAN as well as local spatial multiplexing gain as in massive MIMO; furthermore, by adjusting the number of antennas at each RRH and the number of RRHs, these benefits can be flexibly traded-off with a given total number of antennas deployed. However, the fronthaul issue becomes more severe in the multi-antenna C-RAN as compared to conventional single-antenna counterpart, since the limited fronthaul capacity at each RRH needs to be allocated over the outputs of more antennas, thus leading to more quantization errors in general. To tackle this challenge and address the above design trade-off, in this paper we focus our study on the uplink communication in C-RAN with multi-antenna RRHs each subject to a finite fronthaul capacity constraint. The contributions of this paper are summarized as follows.
\begin{itemize}
\item We propose a novel ``spatial-compression-and-forward (SCF)'' scheme for the multi-antenna RRHs to balance between the information conveyed to the BBU and the data traffic over the fronthaul links. Specifically, each RRH first performs spatial filtering to its received signals at different antennas from the users such that the signals are denoised and maximally compressed into a reduced number of dimensions. Then, the RRH applies the simple uniform scalar quantization \cite{Liu14} over each of these dimensions in parallel by appropriately allocating the number of quantization bits over them under a total rate constraint. Last, the quantized bits by RRHs are sent to the BBU for joint information decoding via their fronthaul links. It is worth noting that at each RRH, the compression of the correlated signals received by all its antennas is fulfilled by a simple linear filter, rather than the complicated distributed source coding, e.g., Wyner-Ziv coding, as in single-antenna C-RAN where all the antennas are physically separated \cite{Yuwei13,Yu13}.

\item With the proposed SCF scheme, we formulate the optimization problem of joint users' power allocation, RRHs' spatial filter design and quantization bits allocation, as well as BBU's receive beamforming to maximize the minimum signal-to-interference-plus-noise ratio (SINR) of all the users in the uplink. We propose an efficient solution to this complicated design problem for ease of practical implementation. First, each RRH computes its own spatial filter in a distributed manner based on its locally received signal covariance matrix. This distributed solution helps save significant fronthaul resources for exchanging control signals with the BBU. Then, the BBU jointly optimizes the other parameters to maximize the minimum SINR of all users based on the alternating optimization technique. Specifically, with given quantization bits allocation at each RRH, we extend the well-known fixed-point method \cite{Yates95} to obtain the optimal transmit power levels for the users and receive beamforming vectors at the BBU. On the other hand, given the above optimized parameters, the corresponding quantization bits allocation at each RRH is efficiently solved.

\item Last, we investigate the following interesting question: \emph{Given a total number of antennas for a target area, should all of them be deployed at one BS, i.e., massive MIMO, or one at each RRH, i.e., single-antenna C-RAN, or optimally divided over a certain number of RRHs, i.e., multi-antenna C-RAN, given a practical total fronthaul capacity constraint for C-RAN?} Through numerical examples, it is shown that with the proposed SCF scheme and joint wireless-fronthaul-BBU optimization, multi-antenna C-RAN generally performs noticeably better than both massive MIMO and single-antenna C-RAN under moderate fronthaul rate constraints.

\end{itemize}

The rest of this paper is organized as follows. Section \ref{sec:System Model} presents the system model for C-RAN with the SCF scheme. Section \ref{sec:Problem Formulation} formulates the minimum SINR maximization problem. Section \ref{sec:Proposed Solution} presents the proposed solution for this problem. Section \ref{sec:Numerical Results} provides numerical results to verify the effectiveness of the proposed multi-antenna C-RAN with SCF. Finally, Section \ref{sec:Conclusions} concludes the paper.

{\it Notation}: Scalars are denoted by lower-case letters, vectors
by bold-face lower-case letters, and matrices by
bold-face upper-case letters. $\mv{I}$ and $\mv{0}$  denote an
identity matrix and an all-zero matrix, respectively, with
appropriate dimensions. For a square matrix $\mv{S}$, $\mv{S}\succeq\mv{0}$ ($\mv{S}\preceq \mv{0}$) means that $\mv{S}$ is positive (negative) semi-definite. For a matrix
$\mv{M}$ of arbitrary size, $\mv{M}^{H}$ and ${\rm rank}(\mv{S})$ denote the
conjugate transpose and rank of $\mv{M}$, respectively. ${\rm diag}(\mv{a})$ denotes a diagonal matrix with the main diagonal given by the vector $\mv{a}$, while ${\rm diag}(\mv{A}_1,\cdots,\mv{A}_K)$ denotes a block diagonal matrix with the diagonal given by square matrices $\mv{A}_1,\cdots,\mv{A}_K$. $E[\cdot]$ denotes the statistical expectation. The
distribution of a circularly symmetric complex Gaussian (CSCG) random vector with mean $\mv{x}$ and
covariance matrix $\mv{\Sigma}$ is denoted by
$\mathcal{CN}(\mv{x},\mv{\Sigma})$; and $\sim$ stands for
``distributed as''. $\mathbb{C}^{x \times y}$ denotes the space of
$x\times y$ complex matrices. $\|\mv{x}\|$ denotes the Euclidean norm of a complex vector
$\mv{x}$. For two real vectors
$\mv{x}$ and $\mv{y}$, $\mv{x}\geq \mv{y}$ means that $\mv{x}$ is
greater than or equal to $\mv{y}$ in a component-wise manner.

\section{System Model}\label{sec:System Model}
%\begin{figure}
%\begin{center}
%\scalebox{0.5}{\includegraphics*{system_model.eps}}
%\end{center}
%\caption{System model for the uplink transmission in multi-antenna C-RAN.}\label{fig1}
%\end{figure}
This paper considers the uplink communication in a multi-antenna C-RAN. As shown in Fig. \ref{fig8}(c), the system consists of one BBU, $N$ RRHs, denoted by the set $\mathcal{N}=\{1,\cdots,N\}$, and $K$ users, denoted by the set $\mathcal{K}=\{1,\cdots,K\}$. It is assumed that each RRH is equipped with $M\geq 1$ antennas, while each user is equipped with one single antenna. It is further assumed that each RRH $n$ is connected to the BBU via a digital error-free fronthaul link with a capacity of $\bar{T}_n$ bits per second (bps). In the uplink, each RRH processes the signals received from all the users into digital bits and forwards them to the BBU via its fronthaul link. Then, the BBU jointly decodes the users' messages based on the signals from all the RRHs. The details of our studied uplink multi-antenna C-RAN are given as follows.

It is assumed that all the $K$ users transmit over quasi-static flat-fading channels over a given bandwidth of $B$ Hz.\footnote{It is worth noting that this paper studies the narrowband wireless channel, while the results obtained can be readily extended to broadband based C-RAN by viewing each frequency subchannel as one norrowband channel considered here and accordingly scaling down the fronthaul capacity $\bar{T}_n$ by the number of frequency subchannels in use.} The equivalent baseband complex symbol received at RRH $n$ is then expressed as
\begin{align}\label{eqn:received signal}
\mv{y}_n=\sum\limits_{k=1}^K\mv{h}_{n,k}\sqrt{p}_ks_k+\mv{z}_n=\mv{H}_n\mv{P}^{\frac{1}{2}}\mv{s}+\mv{z}_n, ~~~ n=1,\cdots,N,
\end{align}where $\mv{s}=[s_1,\cdots,s_K]^T$ with $s_k\sim \mathcal{CN}(0,1)$ denoting the transmit symbol of user $k$ which is modelled as a CSCG random variable with zero-mean and unit-variance, $\mv{P}={\rm diag}(p_1,\cdots,p_K)$ with $p_k$ denoting the transmit power of user $k$, $\mv{H}_n=[\mv{h}_{n,1},\cdots,\mv{h}_{n,K}]\in \mathbb{C}^{M\times K}$ with $\mv{h}_{n,k}\in \mathbb{C}^{M\times 1}$ denoting the channel vector from user $k$ to RRH $n$, and $\mv{z}_n\sim \mathcal{CN}(\mv{0},\sigma_n^2\mv{I})$ denotes the additive white Gaussian noise (AWGN) at RRH $n$. It is assumed that $\mv{z}_n$'s are independent over $n$.

\begin{figure}
\begin{center}
\scalebox{0.5}{\includegraphics*{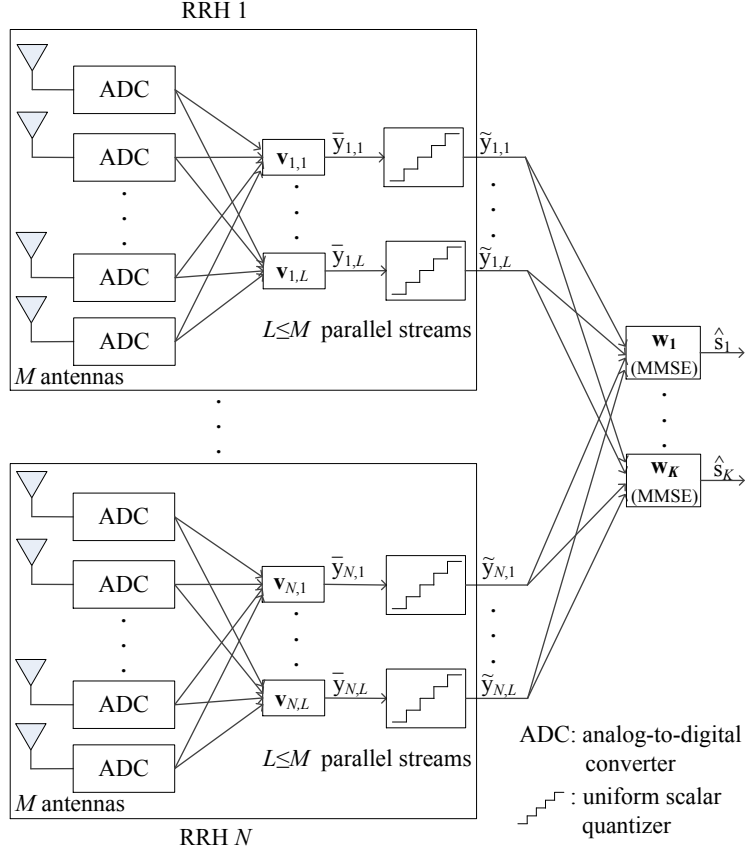}}
\end{center}
\caption{Illustration of C-RAN with SCF applied at each multi-antenna RRH}\label{fig2}
\end{figure}

At each multi-antenna RRH, we consider the use of the SCF scheme to forward the received baseband symbol $\mv{y}_n$ to the BBU via the finite-capacity digital fronthaul link. An illustration of the SCF scheme is shown in Fig. \ref{fig2}, where each RRH first demodulates the signal received from each antenna to the baseband and processes the baseband signals from all antennas via spatial filtering, then conducts a scalar quantization over each output in parallel, and finally forwards the total quantized bits to the BBU via the fronthaul link. Specifically, RRH $n$ first implements spatial filtering to its received signal $\mv{y}_n$ to obtain
\begin{align}\label{eqn:beamforming and quantize}
\bar{\mv{y}}_n=\mv{V}_n\mv{y}_n=\mv{V}_n\sum\limits_{k=1}^K\mv{h}_{n,k}\sqrt{p}_ks_k+\mv{V}_n\mv{z}_n=\mv{V}_n\mv{H}_n\mv{P}^{\frac{1}{2}}\mv{s}+\mv{V}_n\mv{z}_n, ~~~ n=1,\cdots,N,
\end{align}where $\mv{V}_n=[\mv{v}_{n,1},\cdots,\mv{v}_{n,L}]^T\in \mathbb{C}^{L\times M}$ denotes the filtering matrix at RRH $n$ with $L\leq M$ denoting the (reduced) dimension of the output signal $\bar{\mv{y}}_n$ after filtering, which will be specified later. Then, a simple uniform scalar quantization is applied to each element of $\bar{\mv{y}}_n=[\bar{y}_{n,1},\cdots,\bar{y}_{n,L}]^T$ at RRH $n$. Note that each complex symbol $\bar{y}_{n,l}$ can be represented by its in-phase (I) and quadrature (Q) parts as
\begin{align}\label{eqn:inphase and Q}
\bar{y}_{n,l}=\bar{y}_{n,l}^I+j\bar{y}_{n,l}^Q,  ~~~ \forall n,l,
\end{align}where $j^2=-1$, and the I-branch symbol $\bar{y}_{n,l}^I$ and Q-branch symbol $\bar{y}_{n,l}^Q$ are both real Gaussian random variables with zero mean and variance $(\sum_{k=1}^Kp_k|\mv{v}_{n,l}^T\mv{h}_{n,k}|^2+\sigma^2\|\mv{v}_{n,l}\|^2)/2$. A typical method to implement the uniform quantization is via separate I/Q scalar
quantization, the details of which can be found in \cite{Liu14}. After uniform scalar quantization, the baseband quantized symbol of $\bar{\mv{y}}_n$ is then given by
\begin{align}\label{eqn:quantized signal}
\tilde{\mv{y}}_n=\bar{\mv{y}}_n+\mv{e}_n=\mv{V}_n\sum\limits_{k=1}^K\mv{h}_{n,k}\sqrt{p}_ks_k+\mv{V}_n\mv{z}_n+\mv{e}_n, ~~~ n=1,\cdots,N,
\end{align}where $\mv{e}_n=[e_{n,1},\cdots,e_{n,L}]^T$ with $e_{n,l}$, $1\leq l \leq L$, denoting the quantization error for $\bar{y}_{n,l}$ with zero mean and variance $q_{n,l}$. Let $D_{n,l}$ denote the number of bits that RRH $n$ uses to quantize the I-branch or Q-branch of $\bar{y}_{n,l}$, i.e., $\bar{y}_{n,l}^I$ or $\bar{y}_{n,l}^Q$. The quantization noise level $q_{n,l}$ due to $e_{n,l}$ for uniform quantization is then given by \cite{Liu14}
\begin{align}\label{eqn:quantization noise}
q_{n,l}=\left\{\begin{array}{ll}3\left(\sum\limits_{k=1}^Kp_k|\mv{v}_{n,l}^T\mv{h}_{n,k}|^2+\sigma^2\|\mv{v}_{n,l}\|^2\right)2^{-2D_{n,l}}, & {\rm if} ~ D_{n,l}>0, \\ \infty, & {\rm if} ~ D_{n,l}=0, \end{array} \right. ~~~ l=1,\cdots,L, ~ n=1,\cdots,N.
\end{align}Note that $e_{n,l}$'s are independent over $l$ due to independent scalar quantization for each element of $\bar{\mv{y}}_n$, and also over $n$ due to independent processing at different RRHs. As a result, the covariance matrix of $\mv{e}_n$ is a function of $\mv{p}=[p_1,\cdots,p_K]^T$, $\mv{V}_n$ as well as $\mv{D}_n=[D_{n,1},\cdots,D_{n,L}]^T$, which is given by
\begin{align}
\mv{Q}_n(\mv{p},\mv{V}_n,\mv{D}_n)=E[\mv{e}_n\mv{e}_n^H]={\rm diag}(q_{n,1},\cdots,q_{n,L}).\end{align}

Then, each RRH forwards the quantized bits to the BBU via the fronthaul link. The transmission rate in RRH $n$'s fronthaul link is expressed as \cite{Liu14}
\begin{align}\label{eqn:fronthaul link}
T_n=2B\sum\limits_{l=1}^{L}D_{n,l}, ~~~ n=1,\cdots,N.
\end{align}It is worth noting that if $D_{n,l}=0$, then the symbol $\bar{y}_{n,l}$ at the spatial filter output dimension $l$ is not quantized at RRH $n$ and thus not forwarded to the BBU; as a result, $q_{n,l}=\infty$ according to (\ref{eqn:quantization noise}). To summarize, our proposed SCF scheme is a two-stage compression method, where the output signal dimension at each RRH is first reduced from $M$ to $L$ (if $M>L$) by spatial filtering, and then further reduced by allocating the limited fronthaul quantization bits to only a selected subset of the $L$ output dimensions with $D_{n,l}>0$.

The received signal at the BBU from all RRHs is expressed as
\begin{align}
\tilde{\mv{y}}=[\tilde{\mv{y}}_1^T,\cdots,\tilde{\mv{y}}_N^T]^T=\mv{V}\sum\limits_{k=1}^K\mv{h}_{(k)}\sqrt{p}_ks_k+\mv{V}\mv{z}+\mv{e},
\end{align}where $\mv{V}={\rm diag}(\mv{V}_1,\cdots,\mv{V}_N)$, $\mv{h}_{(k)}=[\mv{h}_{1,k}^T,\cdots,\mv{h}_{N,k}^T]^T$, $\mv{z}=[\mv{z}_1^T,\cdots,\mv{z}_N^T]^T$, and $\mv{e}=[\mv{e}_1^T,\cdots,\mv{e}_N^T]^T$. To decode $s_k$, we consider that a linear beamforming\footnote{It is worth noting that the subsequent results also hold similarly for the case with the non-linear successive interference cancellation scheme with any fixed decoding order, which is not considered in this paper due to its high complexity for implementation.} is applied to $\tilde{\mv{y}}$, i.e.,
\begin{align}\label{eqn:decoding}
\hat{s}_k=\mv{w}_k^H\tilde{\mv{y}}=\mv{w}_k^H\mv{V}\mv{h}_{(k)}\sqrt{p}_ks_k+\sum\limits_{j\neq k}\mv{w}_k^H\mv{V}\mv{h}_{(j)}\sqrt{p}_js_j+\mv{w}_k^H\mv{V}\mv{z}+\mv{w}_k^H\mv{e}, ~~~ k=1,\cdots,K,
\end{align}where $\mv{w}_k=[w_{k,1},\cdots,w_{k,NL}]^T\in \mathbb{C}^{NL\times 1}$ is the receive beamformer for $s_k$. Note that $w_{k,(n-1)L+l}=0$ if the $l$th signal dimension at RRH $n$ is not quantized, i.e., $D_{n,l}=0$. According to (\ref{eqn:decoding}), the SINR for decoding $s_k$ is expressed as
\begin{align}\label{eqn:SINR}
\gamma_k=\frac{p_k|\mv{w}_k^H\mv{V}\mv{h}_{(k)}|^2}{\sum\limits_{j\neq k}p_j|\mv{w}_k^H\mv{V}\mv{h}_{(j)}|^2+\sigma^2\|\mv{w}_k^H\mv{V}\|^2+\mv{w}_k^H\mv{Q}(\mv{p},\{\mv{V}_n,\mv{D}_n\})\mv{w}_k}, ~~~ k=1,\cdots,K,
\end{align}where $\mv{Q}(\mv{p},\{\mv{V}_n,\mv{D}_n\})=E[\mv{e}\mv{e}^H]={\rm diag}(\mv{Q}_1(\mv{p},\mv{V}_1,\mv{D}_1),\cdots,\mv{Q}_N(\mv{p},\mv{V}_N,\mv{D}_N))$.

\section{Problem Formulation}\label{sec:Problem Formulation}
%In this section, we first formulate an SINR maximization problem where the minimum SINR of all the users is maximized, and then propose a practical computation architecture to facilitate the RRHs and the BBU to obtain the resource allocation solution with as little information exchange as possible over the limited-capacity fronthaul links.
%
%
%\subsection{Minimum SINR Maximization Problem}

In this paper, we aim to maximize the minimum SINR of all the users by optimizing the users' power allocation, i.e., $\mv{p}$, the spatial filter and quantization bits allocation at each RRH, i.e., $\mv{V}_n$ and $\mv{D}_n$, $\forall n$, and the receive beamforming vectors at the BBU, i.e., $\mv{w}_k$, $\forall k$. Specifically, we aim to solve the following problem:
\begin{align}\mathop{\mathtt{Maximize}}_{\gamma,\mv{p},\{\mv{V}_n,\mv{D}_n\},\{\mv{w}_k\}}
& ~~~ \gamma \label{eqn:problem 1} \\
\mathtt {Subject \ to} & ~~~ \frac{p_k|\mv{w}_k^H\mv{V}\mv{h}_{(k)}|^2}{\sum\limits_{j\neq k}p_j|\mv{w}_k^H\mv{V}\mv{h}_{(j)}|^2+\sigma^2\|\mv{w}_k^H\mv{V}\|^2+\mv{w}_k^H\mv{Q}(\mv{p},\{\mv{V}_n,\mv{D}_n\})\mv{w}_k} \geq \gamma, ~~~ \forall k, \label{eqn:SINR balancing 1}\\ & ~~~ p_k\leq \bar{P}_k, ~~~ \forall k, \label{eqn:power constraint 1} \\
& ~~~ 2B\sum\limits_{l=1}^LD_{n,l}\leq \bar{T}_n, ~~~ \forall n, \label{eqn:fronthaul 1} \\ & ~~~ D_{n,l} ~ {\rm is ~ an ~ integer}, ~~~ \forall n,l, \label{eqn:integer 1}
\end{align}where $\gamma$ denotes the common SINR target for all the $K$ users and $\bar{P}_k$ denotes the transmit power constraint of user $k$.

It can be observed that problem (\ref{eqn:problem 1}) is a non-convex optimization problem since the design variables are complicatedly coupled in its constraint (\ref{eqn:SINR balancing 1}). Note that with given $\mv{V}_n$'s and by setting $D_{n,l}=\infty$, $\forall n,l$, problem (\ref{eqn:problem 1}) reduces to the well-known SINR balancing problem via transmit power control and receive beamforming only. This problem has been efficiently solved by \cite{Schubert04,Zhanglan08} based on the non-negative matrix theory \cite{Horn85}. However, due to the new quantization noise term in the SINR expression, i.e., $\mv{w}_k^H\mv{Q}(\mv{p},\{\mv{V}_n,\mv{D}_n\})\mv{w}_k$, it can be shown that the algorithm proposed in \cite{Schubert04,Zhanglan08} cannot be directly applied to obtain the optimal power control and beamforming solution to problem (\ref{eqn:problem 1}) even with given $\mv{V}_n$'s and finite values of $D_{n,l}$'s. As a result, with the additional variables $\mv{V}_n$'s and integer variables $\mv{D}_n$'s, it is generally difficult to globally solve problem (\ref{eqn:problem 1}).

\begin{figure}
\begin{center}
\scalebox{0.5}{\includegraphics*{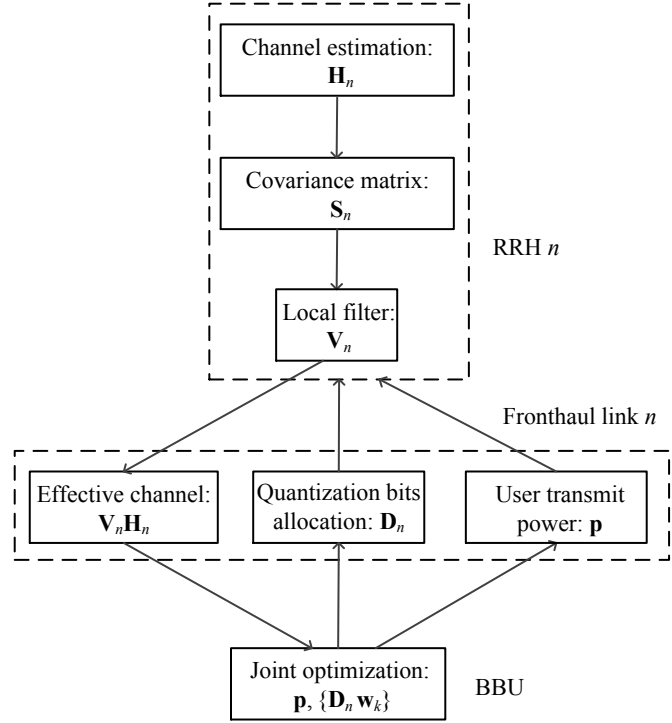}}
\end{center}
\caption{A diagram for illustrating the proposed algorithm for Problem (\ref{eqn:problem 1}).}\label{fig10}
\end{figure}

In this paper, we propose an efficient algorithm to solve problem (\ref{eqn:problem 1}) suboptimally, for which the main procedures are illustrated in Fig. \ref{fig10}. Specifically, in the channel training phase, all the users send orthogonal pilot signals to the RRHs with equal power $\bar{P}\leq \min_{1\leq k \leq K} \bar{P}_k$. Let $\mv{y}_{pilot,n}\in \mathbb{C}^{M\times 1}$ denote the received signal at RRH $n$ during the channel training phase. Then, each RRH $n$ estimates the MIMO channel $\mv{H}_n$ (assumed to be perfect) and computes the covariance matrix of its received signal (with the presumed knowledge of $\bar{P}$ and $\sigma^2$), i.e.,
\begin{align}\label{eqn:covariance matrix}\mv{S}_n=E[\mv{y}_{pilot,n}\mv{y}_{pilot,n}^H]=\mv{H}_n\mv{P}\mv{H}_n^H+\sigma^2\mv{I}=\bar{P}\mv{H}_n\mv{H}_n^H+\sigma^2\mv{I}, ~~~ n=1,\cdots,N.
\end{align}Then each RRH independently determines its spatial filter $\mv{V}_n$ based on $\mv{S}_n$. Next, each RRH $n$ sends the effective channel $\mv{V}_n\mv{H}_n\triangleq \tilde{\mv{H}}_n=[\tilde{\mv{h}}_{n,1},\cdots,\tilde{\mv{h}}_{n,K}]$ to the BBU. Then, based on its global CSI, i.e., $\tilde{\mv{H}}_n$'s, $n=1,\cdots,N$, the BBU computes $\mv{p}$, $\mv{D}_n$'s and $\mv{w}_k$'s by solving the following reduced problem of problem (\ref{eqn:problem 1}):
\begin{align}\mathop{\mathtt{Maximize}}_{\gamma,\mv{p},\{\mv{D}_n\},\{\mv{w}_k\}}
& ~~~ \gamma \label{eqn:problem 11} \\
\mathtt {Subject \ to} & ~~~ \frac{p_k|\mv{w}_k^H\tilde{\mv{h}}_{(k)}|^2}{\sum\limits_{j\neq k}p_j|\mv{w}_k^H\tilde{\mv{h}}_{(j)}|^2+\sigma^2\|\mv{w}_k^H\mv{V}\|^2+\mv{w}_k^H\mv{Q}(\mv{p},\{\mv{V}_n,\mv{D}_n\})\mv{w}_k} \geq \gamma, ~~~ \forall k, \label{eqn:SINR balancing 11}\\ & ~~~ p_k\leq \bar{P}_k, ~~~ \forall k, \label{eqn:power constraint 11} \\
& ~~~ 2B\sum\limits_{l=1}^LD_{n,l}\leq \bar{T}_n, ~~~ \forall n, \label{eqn:fronthaul 11} \\ & ~~~ D_{n,l} ~ {\rm is ~ an ~ integer}, ~~~ \forall n,l, \label{eqn:integer 11}
\end{align}where $\tilde{\mv{h}}_{(k)}=[\tilde{\mv{h}}_{1,k}^T,\cdots,\tilde{\mv{h}}_{N,k}^T]^T$. Finally, the BBU sends $\mv{D}_n$ to RRH $n$, $n=1,\cdots,N$, via its fronthaul link for the implementation of scalar quantization, and $\mv{p}$ to one RRH (say RRH $n$), which then broadcasts $\mv{p}$ to all the users in the downlink. In the following section, we first show how each RRH computes its spatial filter in a distributed manner, and then discuss how the BBU solves problem (\ref{eqn:problem 11}).

\section{Proposed Algorithm for Problem (\ref{eqn:problem 1})}\label{sec:Proposed Solution}
In this section, we present details of the two-stage algorithm for solving problem (\ref{eqn:problem 1}) based on the diagram shown in Fig. \ref{fig10}.

\subsection{Distributed Spatial Filter Design at RRHs}
First, we show how to obtain the spatial filter at each RRH. It is observed from problem (\ref{eqn:problem 1}) that each user's SINR at the BBU depends on $\mv{V}_n$'s, $n=1,\cdots,N$, at all RRHs, which are also coupled with other design variables. As a result, it is difficult to obtain the jointly optimal spatial filters at all RRHs. Therefore, in the following we propose to design $\mv{V}_n$'s in a distributed manner, where each RRH $n$ computes its own spatial filter only based on its local information, i.e., $\mv{S}_n$ in (\ref{eqn:covariance matrix}). First, each RRH $n$ obtains the eigenvalue decomposition (EVD) of $\mv{S}_n$ as
\begin{align}
\mv{S}_n=\mv{U}_n^H\mv{\Lambda}_n\mv{U}_n, ~~~ n=1,\cdots,N,
\end{align}where $\mv{\Lambda}_n={\rm diag}(\lambda_{n,1},\cdots,\lambda_{n,M})\in \mathbb{C}^{M\times M}$ with $\lambda_{n,1}\geq \cdots \geq \lambda_{n,M} \geq 0$ denoting the eigenvalues of $\mv{S}_n$, and $\mv{U}_n=[\mv{u}_{n,1},\cdots,\mv{u}_{n,M}]\in \mathbb{C}^{M\times M}$ is a unitary matrix with $\mv{u}_{n,m}$ denoting the eigenvector of $\mv{S}_n$ associated with the eigenvalue $\lambda_{n,m}$, $m=1,\cdots,M$. Notice that $\mv{U}_n$ can also be obtained as the singular vectors of $\mv{H}_n$ from its singular value decomposition (SVD). Note that since the rank of $\mv{S}_n$ is ${\rm rank}(\mv{S}_n)=\min(M,K)$, $\lambda_{n,m}=0$ if $\min(M,K)<m\leq M$, and thus each RRH only needs to extract the first $L=\min(M,K)$ user signal subspace by setting its spatial filter as 
\begin{align}\label{eqn:beamforming at RRH}
\mv{V}_n=\bar{\mv{U}}_n\triangleq [\mv{u}_{n,1},\cdots,\mv{u}_{n,L}], ~~~ n=1,\cdots,N.
\end{align}As a result, the equivalent channel from all users to RRH $n$ after filtering is given by
\begin{align}
\tilde{\mv{H}}_n=\bar{\mv{U}}_n\mv{H}_n, ~~~ n=1,\cdots,N,
\end{align}which is sent to the BBU in channel training as discussed in Section \ref{sec:Problem Formulation}.

The reasons to set the spatial filter at each RRH as in (\ref{eqn:beamforming at RRH}) are as follows. First, if $M>K$, $\bar{\mv{U}}_n$ denoises the received signal before quantization by extracting the user signal subspace from the received signal plus noise without loss of information \cite{Distributed06,TomLuo07}. Second, if $M>K$, the signal dimension is reduced from $M$ to $L=K$ at each RRH $n$, which simplifies the subsequent uniform scalar quantization at each dimension. Last but not least, with this spatial filer applied at RRH $n$ in data transmission, the filter output $\bar{\mv{y}}_n$ given in (\ref{eqn:beamforming and quantize}) has the following covariance matrix,
\begin{align}
E[\bar{\mv{y}}_n\bar{\mv{y}}_n^H]=\bar{\mv{U}}_n\mv{H}_n\mv{P}\mv{H}_n^H\bar{\mv{U}}_n^H+\sigma^2\mv{I},
\end{align}which is (approximately) diagonal if the transmit power of users, i.e., $p_k$'s, are (closely) equal, i.e., the elements in $\bar{\mv{y}}_n$ are approximately uncorrelated. As a result, independent scalar quantization at each output dimension of $\bar{\mv{y}}_n$ as shown in (\ref{eqn:quantized signal}) becomes more effective.

\subsection{Joint Optimization at BBU}
Next, given the spatial filter at each RRH $\mv{V}_n=\bar{\mv{U}}_n$, $\forall n$, as in (\ref{eqn:beamforming at RRH}), we jointly optimize users' power allocation, RRHs' quantization bits allocation, and BBU's receive beamforming by solving problem (\ref{eqn:problem 11}).

We propose to apply the alternating optimization technique to solve problem (\ref{eqn:problem 11}). Specifically, first we fix the quantization bits allocation $D_{n,l}=\bar{D}_{n,l}$, $\forall n,l$, in problem (\ref{eqn:problem 11}) and optimize the users' power allocation and BBU's receive beamforming solution by solving the following problem.
\begin{align}\mathop{\mathtt{Maximize}}_{\gamma,\mv{p},\{\mv{w}_k\}}
& ~~~ \gamma \label{eqn:problem 2} \\
\mathtt {Subject \ to} & ~~~ \frac{p_k|\mv{w}_k^H\tilde{\mv{h}}_{(k)}|^2}{\sum\limits_{j\neq k}p_j|\mv{w}_k^H\tilde{\mv{h}}_{(j)}|^2+\sigma^2\|\mv{w}_k\|^2+\mv{w}_k^H\mv{Q}(\mv{p},\{\bar{\mv{U}}_n,\bar{\mv{D}}_n\})\mv{w}_k} \geq \gamma, ~~~ \forall k, \label{eqn:SINR balancing 2} \\ & ~~~ p_k\leq \bar{P}_k, ~~~ \forall k. \label{eqn:power constraint 2}
\end{align}Let $\bar{\mv{p}}=[\bar{p}_1,\cdots,\bar{p}_K]^T$ and $\bar{\mv{w}}_k$'s denote the solution to problem (\ref{eqn:problem 2}). Then, we fix the users' power allocation and BBU's receive beamforming solution as $\mv{p}=\bar{\mv{p}}$, $\mv{w}_k=\bar{\mv{w}}_k$'s, respectively, and optimize the RRHs' quantization bits allocation by solving the following problem.
\begin{align}\mathop{\mathtt{Maximize}}_{\gamma,\{\mv{D}_n\}}
& ~~~ \gamma \label{eqn:problem 3} \\
\mathtt {Subject \ to} & ~~~ \frac{\bar{p}_k|\bar{\mv{w}}_k^H\tilde{\mv{h}}_{(k)}|^2}{\sum\limits_{j\neq k}\bar{p}_j|\bar{\mv{w}}_k^H\tilde{\mv{h}}_{(j)}|^2+\sigma^2\|\bar{\mv{w}}_k\|^2+\bar{\mv{w}}_k^H\mv{Q}(\bar{\mv{p}},\{\bar{\mv{U}}_n,\mv{D}_n\})\bar{\mv{w}}_k} \geq \gamma, ~~~ \forall k, \label{eqn:SINR balancing 3}\\
& ~~~ 2B\sum\limits_{l=1}^LD_{n,l}\leq \bar{T}_n, ~~~ \forall n, \label{eqn:fronthaul 3} \\ & ~~~ D_{n,l} ~ {\rm is ~ an ~ integer}, ~~~ \forall n,l. \label{eqn:integer 3}
\end{align}Let $\bar{\mv{D}}_n$'s denote the solution to problem (\ref{eqn:problem 3}). The above procedure is iterated until convergence. In the following, we show how to solve problems (\ref{eqn:problem 2}) and (\ref{eqn:problem 3}), respectively.

{\it 1. Solution to Problem (\ref{eqn:problem 2})}

First, we solve problem (\ref{eqn:problem 2}). As mentioned in Section \ref{sec:Problem Formulation}, due to the additional term accociated with the quantization noise, i.e., $\mv{w}_k^H\mv{Q}(\mv{p},\{\bar{\mv{U}}_n,\bar{\mv{D}}_n\})\mv{w}_k$'s, in the SINR expression, which is coupled with $\mv{p}$ and $\mv{w}_k$'s, the algorithm proposed in \cite{Schubert04} and \cite{Zhanglan08} cannot be applied to obtain the optimal solution to problem (\ref{eqn:problem 2}). In the following, we propose to apply the bisection method \cite{Boyd04} together with the celebrated fixed-point method \cite{Yates95} to globally solve problem (\ref{eqn:problem 2}). Specifically, given any common SINR target $\bar{\gamma}$ for all the users, we solve the following feasibility problem.
\begin{align}\mathop{\mathtt{Find}}
& ~~~ \mv{p},\{\mv{w}_k\} \label{eqn:problem 4} \\
\mathtt {Subject \ to} & ~~~ \frac{p_k|\mv{w}_k^H\tilde{\mv{h}}_{(k)}|^2}{\sum\limits_{j\neq k}p_j|\mv{w}_k^H\tilde{\mv{h}}_{(j)}|^2+\sigma^2\|\mv{w}_k\|^2+\mv{w}_k^H\mv{Q}(\mv{p},\{\bar{\mv{U}}_n,\bar{\mv{D}}_n\})\mv{w}_k} \geq \bar{\gamma}, ~~~ \forall k, \label{eqn:SINR balancing 4} \\ & ~~~ p_k\leq \bar{P}_k, ~~~ \forall k. \label{eqn:power constraint 4}
\end{align}The feasibility problem (\ref{eqn:problem 4}) is still a non-convex problem. In the following, we show that the feasibility of problem (\ref{eqn:problem 4}) can be efficiently checked by solving a sum-power minimization problem without the individual power constraints in (\ref{eqn:power constraint 4}). The sum-power minimization problem is formulated as
\begin{align}\mathop{\mathtt{Minimize}}_{\mv{p},\{\mv{w}_k\}}
& ~~~ \sum\limits_{k=1}^K p_k \label{eqn:problem 5} \\
\mathtt {Subject \ to} & ~~~ \frac{p_k|\mv{w}_k^H\tilde{\mv{h}}_{(k)}|^2}{\sum\limits_{j\neq k}p_j|\mv{w}_k^H\tilde{\mv{h}}_{(j)}|^2+\sigma^2\|\mv{w}_k\|^2+\mv{w}_k^H\mv{Q}(\mv{p},\{\bar{\mv{U}}_n,\bar{\mv{D}}_n\})\mv{w}_k} \geq \bar{\gamma}, ~~~ \forall k. \label{eqn:SINR balancing 5}
\end{align}Note that given any power allocation $\mv{p}$, the optimal linear receive beamforming solution to problem (\ref{eqn:problem 5}) is the minimum-mean-square-error (MMSE) based receiver given by
\begin{align}\label{eqn:MMSE}
\mv{w}_k^{{\rm MMSE}}=\left(\sum\limits_{j\neq k}p_j\tilde{\mv{h}}_{(j)}\tilde{\mv{h}}_{(j)}^H+\sigma^2\mv{I}+\mv{Q}(\mv{p},\{\bar{\mv{U}}_n,\bar{\mv{D}}_n\})\right)^{-1}\tilde{\mv{h}}_{(k)}, ~~~ k=1,\cdots,K.
\end{align}With the above MMSE receivers, the SINR of user $k$ given in (\ref{eqn:SINR}) reduces to
\begin{align}\label{eqn:optimal SINR}
\gamma_k=p_k\tilde{\mv{h}}_{(k)}^H\left(\sum\limits_{j\neq k}p_j\tilde{\mv{h}}_{(j)}\tilde{\mv{h}}_{(j)}^H+\sigma^2\mv{I}+\mv{Q}(\mv{p},\{\bar{\mv{U}}_n,\bar{\mv{D}}_n\})\right)^{-1}\tilde{\mv{h}}_{(k)}, ~~~ k=1,\cdots,K.
\end{align}As a result, problem (\ref{eqn:problem 5}) reduces to the following power control problem.
\begin{align}\mathop{\mathtt{Minimize}}_{\mv{p}}
& ~~~ \sum\limits_{k=1}^K p_k \label{eqn:problem 6} \\
\mathtt {Subject \ to} & ~~~ p_k\tilde{\mv{h}}_{(k)}^H\left(\sum\limits_{j\neq k}p_j\tilde{\mv{h}}_{(j)}\tilde{\mv{h}}_{(j)}^H+\sigma^2\mv{I}+\mv{Q}(\mv{p},\{\bar{\mv{U}}_n,\bar{\mv{D}}_n\})\right)^{-1}\tilde{\mv{h}}_{(k)} \geq \bar{\gamma}, ~~~ \forall k. \label{eqn:SINR balancing 6}
\end{align}

Define $\mv{I}(\mv{p})=[\mv{I}_1(\mv{p}),\cdots,\mv{I}_K(\mv{p})]^T\in \mathbb{R}^{K\times 1}$ with the $k$th element denoted by
\begin{align}\label{eqn:interference function}
\mv{I}_k(\mv{p})=\frac{\bar{\gamma}}{\tilde{\mv{h}}_{(k)}^H\left(\sum\limits_{j\neq k}p_j\tilde{\mv{h}}_{(j)}\tilde{\mv{h}}_{(j)}^H+\sigma^2\mv{I}+\mv{Q}(\mv{p},\{\bar{\mv{U}}_n,\bar{\mv{D}}_n\})\right)^{-1}\tilde{\mv{h}}_{(k)}}, ~~~ k=1,\cdots,K.
\end{align}Then, problem (\ref{eqn:problem 6}) reduces to the following problem.
\begin{align}\mathop{\mathtt{Minimize}}_{\mv{p}}
& ~~~ \sum\limits_{k=1}^K p_k \label{eqn:problem 7} \\
\mathtt {Subject \ to} & ~~~ \mv{p}\geq \mv{I}(\mv{p}). \label{eqn:SINR balancing 7}
\end{align}

\begin{lemma}\label{lemma1}
$\mv{I}(\mv{p})$ given in (\ref{eqn:interference function}) is a standard interference function \cite{Yates95}. In other words, it satisfies 1. $\mv{I}(\mv{p})\geq \mv{0}$ if $\mv{p}\geq \mv{0}$; 2. if $\mv{p}\geq \mv{p}'$, then $\mv{I}(\mv{p})\geq \mv{I}(\mv{p}')$; and 3. $\forall a>1$, it follows that $a\mv{I}(\mv{p})>\mv{I}(a\mv{p})$.
\end{lemma}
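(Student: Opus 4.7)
The plan is to verify the three defining properties of a standard interference function separately, exploiting the fact that the matrix appearing inside the inverse in (\ref{eqn:interference function}) is an affine function of $\mv{p}$ with nonnegative slopes and a strictly positive intercept. Writing
\[
\mv{A}_k(\mv{p}) \triangleq \sum_{j\neq k} p_j \tilde{\mv{h}}_{(j)} \tilde{\mv{h}}_{(j)}^H + \sigma^2 \mv{I} + \mv{Q}(\mv{p},\{\bar{\mv{U}}_n, \bar{\mv{D}}_n\}),
\]
one has $\mv{I}_k(\mv{p}) = \bar{\gamma}/(\tilde{\mv{h}}_{(k)}^H \mv{A}_k(\mv{p})^{-1} \tilde{\mv{h}}_{(k)})$, and by (\ref{eqn:quantization noise}) each diagonal entry of $\mv{Q}$ is of the affine-in-$\mv{p}$ form $3\bigl(\sum_k p_k|\mv{v}_{n,l}^T\mv{h}_{n,k}|^2+\sigma^2\|\mv{v}_{n,l}\|^2\bigr)2^{-2\bar{D}_{n,l}}$, whose coefficients on $\mv{p}$ are nonnegative and whose intercept is strictly positive whenever $\bar{D}_{n,l}<\infty$. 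This structure is what drives all three required properties.

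For positivity, since $\mv{A}_k(\mv{p})\succeq\sigma^2\mv{I}\succ\mv{0}$ for every $\mv{p}\geq\mv{0}$, its inverse is Hermitian positive definite and the denominator $\tilde{\mv{h}}_{(k)}^H\mv{A}_k(\mv{p})^{-1}\tilde{\mv{h}}_{(k)}$ is strictly positive, hence $\mv{I}_k(\mv{p})>0$. For monotonicity, if $\mv{p}\geq\mv{p}'$ componentwise then each rank-one term $p_j\tilde{\mv{h}}_{(j)}\tilde{\mv{h}}_{(j)}^H$ grows in the Loewner order and every diagonal entry of $\mv{Q}(\mv{p},\cdot)$ grows entrywise, so $\mv{A}_k(\mv{p})\succeq\mv{A}_k(\mv{p}')\succ\mv{0}$; operator antitonicity of matrix inversion then gives $\mv{A}_k(\mv{p})^{-1}\preceq\mv{A}_k(\mv{p}')^{-1}$, which shrinks the denominator of $\mv{I}_k$ and yields $\mv{I}_k(\mv{p})\geq\mv{I}_k(\mv{p}')$.

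For scalability with $a>1$, I would derive the key identity
\[
a\mv{A}_k(\mv{p})-\mv{A}_k(a\mv{p}) = (a-1)\sigma^2\mv{I}+\bigl(a\mv{Q}(\mv{p},\cdot)-\mv{Q}(a\mv{p},\cdot)\bigr),
\]
in which the rank-one interference sum cancels because it is exactly linear in $\mv{p}$. Using (\ref{eqn:quantization noise}), the matrix $a\mv{Q}(\mv{p},\cdot)-\mv{Q}(a\mv{p},\cdot)$ is diagonal with $(l,l)$-entry $3(a-1)\sigma^2\|\mv{v}_{n,l}\|^2 2^{-2\bar{D}_{n,l}}\geq 0$, so together with the strictly positive $(a-1)\sigma^2\mv{I}$ contribution we obtain $a\mv{A}_k(\mv{p})\succ\mv{A}_k(a\mv{p})$. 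Inverting and taking the quadratic form against $\tilde{\mv{h}}_{(k)}$ then gives $a\,\tilde{\mv{h}}_{(k)}^H\mv{A}_k(a\mv{p})^{-1}\tilde{\mv{h}}_{(k)}>\tilde{\mv{h}}_{(k)}^H\mv{A}_k(\mv{p})^{-1}\tilde{\mv{h}}_{(k)}$, which, after reciprocating and multiplying by $\bar{\gamma}$, is exactly $a\mv{I}_k(\mv{p})>\mv{I}_k(a\mv{p})$.

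The main obstacle I anticipate is bookkeeping rather than conceptual difficulty: one must carefully track the affine decomposition of $\mv{Q}(\mv{p},\cdot)$ so that in the scalability step the $\mv{p}$-dependent part cancels exactly, leaving only the noise-floor intercept multiplied by $(a-1)$ to produce the strict inequality, and one must handle the degenerate case $\bar{D}_{n,l}=0$ (where $q_{n,l}=\infty$). The latter is easily sidestepped by noting that such output dimensions are excluded from the BBU receive beamformer, so the corresponding rows and columns of $\mv{A}_k$ may be pruned without altering the quadratic form $\tilde{\mv{h}}_{(k)}^H\mv{A}_k(\mv{p})^{-1}\tilde{\mv{h}}_{(k)}$, leaving all three arguments intact.
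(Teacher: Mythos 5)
Your proposal is correct and follows essentially the same route as the paper's proof: positivity from $\mv{A}_k(\mv{p})\succeq\sigma^2\mv{I}\succ\mv{0}$, monotonicity from the Loewner-order growth of the affine map $\mv{p}\mapsto\mv{A}_k(\mv{p})$ combined with operator antitonicity of matrix inversion (the paper's Lemma cited from \cite[Corollary 7.7.4]{Horn85}), and scalability from $a\mv{A}_k(\mv{p})\succ\mv{A}_k(a\mv{p})$ because the $\mv{p}$-linear parts cancel and only the strictly positive noise intercepts survive multiplied by $(a-1)$. Your explicit cancellation identity and the remark on pruning dimensions with $\bar{D}_{n,l}=0$ are slightly more careful than the paper's writeup, but they do not change the argument.
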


\begin{proof}
Please refer to Appendix \ref{appendix1}.
\end{proof}

Based on Lemma \ref{lemma1}, we have the following corollaries.

\begin{corollary}\label{corollary1}
If $\bar{\gamma}$ is not achievable for all the $K$ users in problem (\ref{eqn:problem 6}), the iterative fixed-point method $\mv{p}^{(i+1)}=\mv{I}(\mv{p}^{(i)})$ with the initial point $\mv{p}^{(0)}=\mv{0}$, where $\mv{p}^{(i)}$ denotes the power allocation solution obtained in the $i$th iteration of the above fixed-point method, will converge to $\mv{p}\rightarrow \infty$.
\end{corollary}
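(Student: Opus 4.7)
The plan is to prove the contrapositive form of the classical Yates convergence result, by combining the monotone convergence theorem with a fixed-point consistency argument built on the three properties established in Lemma~\ref{lemma1}.

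First I would show that $\{\mv{p}^{(i)}\}$ is a componentwise non-decreasing sequence. Since $\mv{p}^{(0)}=\mv{0}$, property~1 of Lemma~\ref{lemma1} gives $\mv{p}^{(1)}=\mv{I}(\mv{0})\geq \mv{0}=\mv{p}^{(0)}$. Applying property~2 (monotonicity) inductively then yields $\mv{p}^{(i+1)}=\mv{I}(\mv{p}^{(i)})\geq \mv{I}(\mv{p}^{(i-1)})=\mv{p}^{(i)}$ for all $i\geq 1$. Hence every coordinate of $\mv{p}^{(i)}$ is a non-decreasing real sequence, so either the vector sequence converges to a finite limit $\mv{p}^{\star}$ or at least one coordinate diverges to $+\infty$.

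Next I would rule out the first possibility by contradiction. Suppose $\mv{p}^{(i)}\to \mv{p}^{\star}$ with $\mv{p}^{\star}$ finite. Because the matrix inverted in (\ref{eqn:interference function}) always contains the strictly positive-definite term $\sigma^2\mv{I}$, the map $\mv{I}(\cdot)$ is jointly continuous on the closed non-negative orthant, and passing to the limit in $\mv{p}^{(i+1)}=\mv{I}(\mv{p}^{(i)})$ gives $\mv{p}^{\star}=\mv{I}(\mv{p}^{\star})$. Written componentwise via the definition of $\mv{I}_k$, this amounts to
\begin{equation*}
p_k^{\star}\,\tilde{\mv{h}}_{(k)}^{H}\Big(\sum_{j\neq k}p_j^{\star}\tilde{\mv{h}}_{(j)}\tilde{\mv{h}}_{(j)}^{H}+\sigma^2\mv{I}+\mv{Q}(\mv{p}^{\star},\{\bar{\mv{U}}_n,\bar{\mv{D}}_n\})\Big)^{-1}\tilde{\mv{h}}_{(k)}=\bar{\gamma},\qquad \forall k,
\end{equation*}
which by (\ref{eqn:optimal SINR}) says exactly that the MMSE receive beamformers associated with $\mv{p}^{\star}$ deliver SINR $\bar{\gamma}$ to every user. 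Hence $(\mv{p}^{\star},\{\mv{w}_k^{\rm MMSE}\})$ is feasible for problem (\ref{eqn:problem 6}), contradicting the hypothesis that $\bar{\gamma}$ is not achievable. Therefore the sequence must be unbounded, i.e., $\mv{p}^{(i)}\to\infty$.

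The main obstacle I anticipate is justifying the continuity step used to interchange $\mv{I}$ with the limit — specifically, confirming that $\mv{I}(\mv{p})$ is well-defined and jointly continuous on the entire non-negative orthant. This reduces to checking that the matrix inside the inverse in (\ref{eqn:interference function}) is uniformly bounded away from singularity, which is guaranteed by the $\sigma^2\mv{I}$ term. Everything else is a routine specialization of the Yates framework; in particular, property~3 (scalability) of Lemma~\ref{lemma1} is not actually invoked here and would instead serve to establish uniqueness of the fixed point in the complementary case where $\bar{\gamma}$ is achievable.
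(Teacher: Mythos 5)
Your proof is correct and follows essentially the same route as the paper's: show the iterates are componentwise non-decreasing from $\mv{p}^{(0)}=\mv{0}$ using properties 1 and 2 of Lemma~\ref{lemma1}, then rule out a finite limit because it would be a fixed point of $\mv{I}(\cdot)$ and hence a feasible power vector satisfying (\ref{eqn:SINR balancing 7}), contradicting the non-achievability of $\bar{\gamma}$. Your explicit continuity argument for passing to the limit is a minor refinement of a step the paper leaves implicit, but the approach is identical.
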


\begin{proof}
Please refer to Appendix \ref{appendix5}.
\end{proof}

\begin{corollary}\label{corollary3}
If $\bar{\gamma}$ is achievable for all the $K$ users in problem (\ref{eqn:problem 6}), the iterative fixed-point method $\mv{p}^{(i+1)}=\mv{I}(\mv{p}^{(i)})$ will converge to the optimal solution to problem (\ref{eqn:problem 7}) with any initial point $\mv{p}^{(0)}\geq \mv{0}$.
\end{corollary}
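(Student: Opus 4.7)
The plan is to invoke the three axioms of standard interference functions established in Lemma~\ref{lemma1}, together with the feasibility of $\bar{\gamma}$, to show that the iteration admits a unique fixed point to which every orbit converges, and that this fixed point is the componentwise-minimum feasible power vector and hence solves problem~(\ref{eqn:problem 7}).

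First I would analyze the lower-envelope orbit $\mv{p}_L^{(0)}=\mv{0}$, $\mv{p}_L^{(i+1)}=\mv{I}(\mv{p}_L^{(i)})$. Positivity gives $\mv{p}_L^{(1)}\geq\mv{p}_L^{(0)}$, and monotonicity propagates this inductively, so $\{\mv{p}_L^{(i)}\}$ is componentwise non-decreasing. Feasibility of $\bar{\gamma}$ produces some $\hat{\mv{p}}\geq\mv{0}$ satisfying $\hat{\mv{p}}\geq\mv{I}(\hat{\mv{p}})$, which serves as a uniform upper bound for $\mv{p}_L^{(i)}$ by another induction using monotonicity. Hence $\mv{p}_L^{(i)}$ converges to some $\mv{p}^{\star}\in[\mv{0},\hat{\mv{p}}]$, and continuity of $\mv{I}(\cdot)$ (inherited from the matrix inverse and bilinear form in~(\ref{eqn:interference function})) yields $\mv{p}^{\star}=\mv{I}(\mv{p}^{\star})$.

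Next I would establish uniqueness of the fixed point via scalability. If $\mv{p}^{\star}$ and $\mv{q}^{\star}$ are both fixed points, set $a=\max_k q^{\star}_k/p^{\star}_k$, so that $a\mv{p}^{\star}\geq\mv{q}^{\star}$ with equality in the coordinate achieving the maximum. If $a>1$, scalability and monotonicity yield $a\mv{p}^{\star}=a\mv{I}(\mv{p}^{\star})>\mv{I}(a\mv{p}^{\star})\geq\mv{I}(\mv{q}^{\star})=\mv{q}^{\star}$, contradicting equality in that coordinate; hence $a\leq1$, and swapping roles gives $\mv{p}^{\star}=\mv{q}^{\star}$. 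To handle an arbitrary start $\mv{p}^{(0)}\geq\mv{0}$, I would sandwich the orbit between $\mv{p}_L^{(i)}$ and an upper envelope $\mv{p}_H^{(i)}$ seeded at $\mv{p}_H^{(0)}=c\mv{p}^{\star}$ with $c\geq1$ chosen so that $c\mv{p}^{\star}\geq\mv{p}^{(0)}$. Scalability gives strict decrease at the first step, $\mv{p}_H^{(1)}=\mv{I}(c\mv{p}^{\star})<c\mv{p}^{\star}$, and monotonicity combined with $\mv{p}_H^{(i)}\geq\mv{p}^{\star}$ (proved inductively from $c\geq1$) shows $\mv{p}_H^{(i)}$ is non-increasing and bounded below by $\mv{p}^{\star}$, hence converges to a fixed point that must equal $\mv{p}^{\star}$ by uniqueness. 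Monotonicity then forces $\mv{p}_L^{(i)}\leq\mv{p}^{(i)}\leq\mv{p}_H^{(i)}$, so $\mv{p}^{(i)}\to\mv{p}^{\star}$.

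Finally, for optimality in problem~(\ref{eqn:problem 7}), any feasible $\mv{p}$ satisfies $\mv{p}\geq\mv{I}(\mv{p})$; by monotonicity and induction starting from $\mv{p}_L^{(0)}=\mv{0}\leq\mv{p}$, one obtains $\mv{p}_L^{(i)}\leq\mv{p}$ for every $i$, hence $\mv{p}^{\star}\leq\mv{p}$ componentwise, so $\mv{p}^{\star}$ minimizes $\sum_k p_k$. The step I expect to need the most care is the uniqueness-plus-sandwich combination: the strictness in the scalability axiom has to be applied precisely at the coordinate where equality would otherwise hold, and ensuring that the upper envelope seeded at $c\mv{p}^{\star}$ is genuinely driven down to $\mv{p}^{\star}$ rather than stalling above it relies on this strict inequality propagating through the monotone tail.
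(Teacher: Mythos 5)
Your proposal is correct, and it is essentially the argument the paper relies on: the paper gives no proof of its own here but simply cites \cite[Theorem 2]{Yates95}, and your monotone lower orbit from $\mv{0}$, scalability-based uniqueness of the fixed point, sandwich via an upper orbit seeded at $c\mv{p}^{\star}$ with $c\geq 1$, and componentwise-minimality conclusion reconstruct exactly that standard proof (your final worry is unneeded, since the non-increasing bounded upper orbit converges to \emph{some} fixed point by continuity and uniqueness then identifies it with $\mv{p}^{\star}$, with the degenerate case $c=1$ being trivial).
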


\begin{proof}
Please refer to \cite[Theorem 2]{Yates95}.
\end{proof}

\begin{corollary}\label{corollary2}
If $\bar{\gamma}$ is achievable for all the $K$ users in problem (\ref{eqn:problem 6}), the optimal power solution $\mv{p}^\ast$ to problem (\ref{eqn:problem 7}) is component-wise minimum in the sense that any other feasible power solution $\mv{p}'$ that satisfies (\ref{eqn:SINR balancing 7}) must satisfy $\mv{p}'\geq \mv{p}^\ast$.
\end{corollary}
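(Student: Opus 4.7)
The plan is to exploit the monotonicity of the standard interference function $\mv{I}(\mv{p})$ established in Lemma \ref{lemma1}, together with the fixed-point convergence guarantee in Corollary \ref{corollary3}, to pin down $\mv{p}^\ast$ as a componentwise lower bound on every feasible power vector.

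First, I would invoke Corollary \ref{corollary3} with the specific initialization $\mv{p}^{(0)}=\mv{0}$. This guarantees that the iterates generated by $\mv{p}^{(i+1)}=\mv{I}(\mv{p}^{(i)})$ converge to the optimal solution $\mv{p}^\ast$ of problem (\ref{eqn:problem 7}); in particular $\mv{p}^\ast=\mv{I}(\mv{p}^\ast)$ at the fixed point, so $\mv{p}^\ast$ meets (\ref{eqn:SINR balancing 7}) with equality.

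Next, let $\mv{p}'$ be any feasible power vector, i.e., $\mv{p}'\geq \mv{I}(\mv{p}')$. I would prove by induction on $i$ that $\mv{p}^{(i)}\leq \mv{p}'$. The base case holds trivially because $\mv{p}^{(0)}=\mv{0}$ and feasibility, combined with property 1 of Lemma \ref{lemma1}, forces $\mv{p}'\geq \mv{I}(\mv{p}')\geq \mv{0}$. For the inductive step, assuming $\mv{p}^{(i)}\leq \mv{p}'$, the monotonicity property (property 2 of Lemma \ref{lemma1}) gives $\mv{I}(\mv{p}^{(i)})\leq \mv{I}(\mv{p}')$, and chaining with the feasibility inequality $\mv{I}(\mv{p}')\leq \mv{p}'$ yields $\mv{p}^{(i+1)}\leq \mv{p}'$, closing the induction.

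Finally, I would pass to the limit $i\to\infty$ in the componentwise inequality $\mv{p}^{(i)}\leq \mv{p}'$; invoking $\mv{p}^{(i)}\to \mv{p}^\ast$ from Corollary \ref{corollary3} then yields $\mv{p}^\ast\leq \mv{p}'$, which is the claimed componentwise minimality. The argument is largely mechanical once Lemma \ref{lemma1} and Corollary \ref{corollary3} are available, so there is no serious obstacle; the only delicate design choice is initializing the iteration at the zero vector, which simultaneously seeds the induction (via the base case $\mv{0}\leq \mv{p}'$) and, through Corollary \ref{corollary3}, identifies the limit of the iteration as the optimal $\mv{p}^\ast$ itself.
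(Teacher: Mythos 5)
Your proof is correct, and it is a mirror image of the paper's argument rather than a literal copy. The paper initializes the fixed-point iteration \emph{at} the arbitrary feasible point $\mv{p}'$ and uses property 2 of Lemma \ref{lemma1} to show the iterates are monotonically non-increasing, then appeals to the convergence of the fixed-point method to conclude that this decreasing sequence lands on $\mv{p}^\ast$, giving $\mv{p}'\geq \mv{p}^\ast$. You instead run the iteration from $\mv{p}^{(0)}=\mv{0}$ and use the same monotonicity property to sandwich every iterate below $\mv{p}'$, then pass to the limit via Corollary \ref{corollary3}. Both routes rest on exactly the same two ingredients (monotonicity of $\mv{I}$ and convergence of the iteration to the optimum), so neither is more general; the small advantage of your version is that the iterates from $\mv{0}$ are monotonically increasing (the same observation used in the proof of Corollary \ref{corollary1}) and bounded above by $\mv{p}'$, so convergence of the sequence itself is elementary and Corollary \ref{corollary3} is needed only to identify the limit as $\mv{p}^\ast$ --- indeed, your argument could be made self-contained by noting that an increasing, bounded sequence of fixed-point iterates converges to a fixed point dominated by every feasible $\mv{p}'$, which is then automatically the sum-power-minimizing solution. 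The paper's version, by contrast, leans on Yates' theorem to assert that the decreasing sequence started at $\mv{p}'$ actually descends all the way to the optimum. Either way, the conclusion $\mv{p}^\ast\leq\mv{p}'$ follows, and your write-up has no gaps.
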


\begin{proof}
Please refer to Appendix \ref{appendix2}.
\end{proof}

According to Corollaries \ref{corollary1}-\ref{corollary2}, the feasibility of problem (\ref{eqn:problem 4}) can be efficiently checked as follows. First, we apply the fixed-point method in Corollary \ref{corollary1} to solve the sum-power minimization problem (\ref{eqn:problem 7}) with the initial point $\mv{p}^{(0)}=\mv{0}$. If the obtained solution, denoted by $\mv{p}^\ast$, is infinity (unbounded), then $\bar{\gamma}$ cannot be achieved by all the $K$ users simultaneously even without the individual power constraints. Therefore, problem (\ref{eqn:problem 4}) is not feasible. Otherwise, if $\mv{p}^\ast$ is of finite value, $\bar{\gamma}$ can be achieved by all the users without the individual power constraints. In this case, we check whether $\mv{p}^\ast$ satisfies all the given individual power constraints. If $\mv{p}^\ast$ does not satisfy all the individual power constraints, according to Corollary \ref{corollary2}, all the power solutions that satisfy the SINR constraints cannot satisfy the given individual power constraints. As a result, problem (\ref{eqn:problem 4}) is not feasible. Otherwise, if $\mv{p}^\ast$ satisfies all the individual power constraints, it is a feasible solution to problem (\ref{eqn:problem 4}), i.e., problem (\ref{eqn:problem 4}) is feasible.

Let $\gamma^\ast$ denote the optimal value of problem (\ref{eqn:problem 2}). The algorithm to solve problem (\ref{eqn:problem 2}) based on the bisection method is summarized in Table \ref{table1}. Note that in Step 1, $\gamma_{{\rm max}}$ can be set as $\gamma_{{\rm max}}=\max_{1\leq k\leq K} \tilde{\gamma}_k$, where $\tilde{\gamma}_k$ is obtained by setting $p_k=\bar{P}_k$ and $p_j=0$, $\forall j\neq k$, in (\ref{eqn:optimal SINR}).
\begin{table}[htp]
\begin{center}
\caption{\textbf{Algorithm \ref{table1}}: Algorithm for
Problem (\ref{eqn:problem 2})} \vspace{0.2cm}
 \hrule
\vspace{0.2cm}
\begin{itemize}
\item[1.] {\bf Initialize} $\gamma_{{\rm min}}=0$, $\gamma_{{\rm max}}\geq \gamma^\ast$;
\item[2.] {\bf Repeat}
\begin{itemize}
\item[a)] Set $\bar{\gamma}=\frac{\gamma_{{\rm max}}+\gamma_{{\rm min}}}{2}$;
\item[b)] Solve problem (\ref{eqn:problem 7}) by the fixed-point method as in Corollary \ref{corollary1} with initial point $\mv{p}^{(0)}=\mv{0}$;
\item[c)] If the obtained solution $\mv{p}^\ast$ is infinity or does not satisfy all the given individual power constraints in (\ref{eqn:power constraint 2}), set $\gamma_{{\rm max}}=\bar{\gamma}$; otherwise, set $\gamma_{{\rm min}}=\bar{\gamma}$;
\end{itemize}
\item[3.] {\bf Until} $\gamma_{{\rm max}}-\gamma_{{\rm min}}\leq \epsilon$, where $\epsilon>0$ is a small constraint to control the algorithm accuracy;
\item[4.] Calculate $\mv{w}_k$'s by substituting $\mv{p}^\ast$ into (\ref{eqn:MMSE}).
\end{itemize}
\vspace{0.2cm} \hrule \label{table1}
\end{center}
\end{table}

%\begin{remark}\label{remark1}
%It can be observed from Algorithm \ref{table1} that the effective channels $\tilde{\mv{H}}_n=\bar{\mv{V}}_n\mv{H}_n$'s are sufficient for the BBU to compute the users' power allocation as well as the BBU's beamforming solution.
%\end{remark}

{\it 2. Solution to Problem (\ref{eqn:problem 3})}

Next, we consider problem (\ref{eqn:problem 3}). Note that the quantization noise power given in (\ref{eqn:quantization noise}), i.e., $q_{n,l}$, is a continuous function over quantization bits allocation $D_{n,l}$ (assuming it is continuous), except when $D_{n,l}=0$. To deal with this issue, we approximate the quantization noise power with the following continuous function of $D_{n,l}$, i.e.,
\begin{align}\label{eqn:approaximated quantization noise}
q_{n,l}=3\left(\sum\limits_{k=1}^Kp_k|\mv{v}_{n,l}^T\mv{h}_{n,k}|^2+\sigma^2\|\mv{v}_{n,l}\|^2\right)2^{-2D_{n,l}}, ~~~ D_{n,l}\geq 0, ~~~ \forall n,l.
\end{align}The difference of (\ref{eqn:approaximated quantization noise}) from (\ref{eqn:quantization noise}) is that when $D_{n,l}=0$, $q_{n,l}$ equals $3(\sum_{k=1}^Kp_k|\mv{v}_{n,l}^T\mv{h}_{n,k}|^2+\sigma^2\|\mv{v}_{n,l}\|^2)$ instead of infinity. This indicates that even with $D_{n,l}=0$, some quantized information at the $l$th signal dimension of RRH $n$ is forwarded to the BBU. However, since $3(\sum_{k=1}^Kp_k|\mv{v}_{n,l}^T\mv{h}_{n,k}|^2+\sigma^2\|\mv{v}_{n,l}\|^2)$ is generally much larger than the signal power of any user $k$, i.e., $p_k|\mv{v}_{n,l}^T\mv{h}_{n,k}|^2$'s, its effect on the user SINR given in (\ref{eqn:SINR}) is negligible. Therefore, we use (\ref{eqn:approaximated quantization noise}) to approximate the quantization noise power for solving problem (\ref{eqn:problem 3}). Note that when the actual SINR is computed using (\ref{eqn:SINR}), the quantization noise power is set to be infinity if the obtained $D_{n,l}$ is zero.

With the approximation (\ref{eqn:approaximated quantization noise}), problem (\ref{eqn:problem 3}) is still challenging to solve due to the integer constraints for $D_{n,l}$'s. Let problem (\ref{eqn:problem 3}-NoInt) denote the relaxation of problem (\ref{eqn:problem 3}) without the integer constraints in (\ref{eqn:integer 3}). In the following, we first solve problem (\ref{eqn:problem 3}-NoInt), the solution to which may not satisfy all the integer constraints. Then, we propose an efficient algorithm to obtain a set of integer solutions for all $D_{n,l}$'s based on the solution of relaxed problem (\ref{eqn:problem 3}-NoInt).

Similar to problem (\ref{eqn:problem 2}), problem (\ref{eqn:problem 3}-NoInt) can be globally solved by the bisection method. Given any common SINR target $\bar{\gamma}$ for all the users, we need to solve the following feasibility problem.
\begin{align}\mathop{\mathtt{Find}}
& ~~~ \{\mv{D}_n\} \label{eqn:problem 8} \\
\mathtt {Subject \ to} & ~~~ \frac{\bar{p}_k|\bar{\mv{w}}_k^H\tilde{\mv{h}}_{(k)}|^2}{\sum\limits_{j\neq k}\bar{p}_j|\bar{\mv{w}}_k^H\tilde{\mv{h}}_{(j)}|^2+\sigma^2\|\bar{\mv{w}}_k\|^2+\bar{\mv{w}}_k^H\mv{Q}(\bar{\mv{p}},\{\bar{\mv{U}}_n,\mv{D}_n\})\bar{\mv{w}}_k} \geq \bar{\gamma}, ~~~ \forall k, \label{eqn:SINR balancing 8}\\
& ~~~ 2B\sum\limits_{l=1}^LD_{n,l}\leq \bar{T}_n, ~~~ \forall n. \label{eqn:fronthaul 8}
\end{align}Note that we have
\begin{align}
\bar{\mv{w}}_k^H\mv{Q}(\bar{\mv{p}},\{\bar{\mv{U}}_n,\mv{D}_n\})\bar{\mv{w}}_k=\sum\limits_{n=1}^N\sum\limits_{l=1}^Lq_{n,l}|\bar{w}_{k,(n-1)L+m}|^2=\sum\limits_{n=1}^N\sum\limits_{l=1}^L\theta_{n,l,k}2^{-2D_{n,l}}, ~~~ \forall k,
\end{align}where $\bar{w}_{i,j}$ denotes the $j$th element of $\bar{\mv{w}}_i$, $1\leq i\leq K$, $1\leq j \leq NL$, and \begin{align}\theta_{n,l,k}=3|\bar{w}_{k,(n-1)L+l}|^2\left(\sum\limits_{i=1}^Kp_i|\mv{h}_{n,i}|^2+\sigma^2\right), ~~~ \forall n,l.
\end{align}Note that $\theta_{n,l,k}$ can be interpreted as the effective quantization noise power due to the $l$th quantized dimension at RRH $n$ in decoding $s_k$ at the BBU. Then, problem (\ref{eqn:problem 8}) is equivalent to the following problem.
\begin{align}\mathop{\mathtt{Find}}
& ~~~ \{\mv{D}_n\} \label{eqn:problem 9} \\
\mathtt {Subject \ to} & ~~~ \sum\limits_{n=1}^N\sum\limits_{l=1}^L\theta_{n,l,k}2^{-2D_{n,l}}\leq \frac{\bar{p}_k|\bar{\mv{w}}_k^H\tilde{\mv{h}}_{(k)}|^2}{\bar{\gamma}}-\sum\limits_{j\neq k}\bar{p}_j|\bar{\mv{w}}_k^H\tilde{\mv{h}}_{(j)}|^2-\sigma^2\|\bar{\mv{w}}_k\|^2, ~~~ \forall k, \label{eqn:SINR balancing 9}\\
& ~~~ 2B\sum\limits_{l=1}^LD_{n,l}\leq \bar{T}_n, ~~~ \forall n. \label{eqn:fronthaul 9}
\end{align}Problem (\ref{eqn:problem 9}) can be shown to be a convex feasibility problem, and thus can be efficiently solved via the interior-point method \cite{Boyd04}. Let $\gamma^\star$ denote the optimal value of problem (\ref{eqn:problem 3}-NoInt). The algorithm for problem (\ref{eqn:problem 3}-NoInt) is then summarized in Table \ref{table2}. Note that in Step 1, $\gamma_{{\rm max}}$ can be similarly set as in Algorithm \ref{table1}.
\begin{table}[htp]
\begin{center}
\caption{\textbf{Algorithm \ref{table2}}: Algorithm for
Problem (\ref{eqn:problem 3}-NoInt)} \vspace{0.2cm}
 \hrule
\vspace{0.2cm}
\begin{itemize}
\item[1.] {\bf Initialize} $\gamma_{{\rm min}}=0$, $\gamma_{{\rm max}}\geq \gamma^\star$;
\item[2.] {\bf Repeat}
\begin{itemize}
\item[a)] Set $\bar{\gamma}=\frac{\gamma_{{\rm max}}+\gamma_{{\rm min}}}{2}$;
\item[b)] Solve problem (\ref{eqn:problem 9}) by the interior-point method. If problem (\ref{eqn:problem 9}) is feasible, set $\gamma_{{\rm min}}=\bar{\gamma}$; otherwise, set $\gamma_{{\rm max}}=\bar{\gamma}$;
\end{itemize}
\item[3.] {\bf Until} $\gamma_{{\rm max}}-\gamma_{{\rm min}}\leq \epsilon$, where $\epsilon>0$ is a small constraint to control the algorithm accuracy.
\end{itemize}
\vspace{0.2cm} \hrule \label{table2}
\end{center}
\end{table}

It is worth noting that the solution obtained by Algorithm \ref{table2} may not satisfy all the integer constraints given by (\ref{eqn:integer 3}) in problem (\ref{eqn:problem 3}). In the following, we show how to obtain a set of integer solutions $\{\mv{D}_n\}$ for problem (\ref{eqn:problem 3}) based on the solution of relaxed problem (\ref{eqn:problem 3}-NoInt). Similar to \cite{Liu14}, we propose to round each $D_{n,l}$ to its nearby integer as follows.
\begin{align}\label{eqn:feasible}
\hat{D}_{n,l}=\left\{\begin{array}{ll}\lfloor D_{n,l} \rfloor, & {\rm if} ~ D_{n,l}-\lfloor D_{n,l} \rfloor \leq \alpha_n, \\ \lceil D_{n,l} \rceil, & {\rm otherwise},\end{array}\right. ~~~ n=1,\cdots,N, ~~~ l=1,\cdots,L,
\end{align}where $0\leq \alpha_n \leq 1$, $\forall n$, and $\lfloor x \rfloor$, $\lceil x \rceil$ denote the maximum integer that is no larger than $x$ and the minimum integer that is no smaller than $x$, respectively. Note that we can always find a feasible solution of $\hat{D}_{n,l}$'s by simply setting $\alpha_n=1$, $\forall n$, in (\ref{eqn:feasible}) since in this case we have $\sum_{l=1}^L\hat{D}_{n,l}\leq \sum_{l=1}^LD_{n,l} \leq \bar{T}_n/2B$. In the following, we show how to find a generally better feasible solution by optimizing $\alpha_n$'s.
Since $\hat{D}_{n,l}$'s decrease with $\alpha_n$'s, as $\alpha_n$'s become smaller, the resulting $\hat{D}_{n,l}$'s from (\ref{eqn:feasible}) achieve higher SINRs for all the users, but the fronthaul constraints for the RRHs become more difficult to be satisfied. Hence, we propose to apply a simple bisection method to find the optimal values of $\alpha_n$'s, and then substitute $\alpha_n$'s into (\ref{eqn:feasible}) to obtain $\hat{D}_{n,l}$'s. The algorithm is summarized in Table \ref{table4}.

\begin{table}[htp]
\begin{center}
\caption{\textbf{Algorithm \ref{table4}}: Algorithm to Find Feasible Integer Solution $\hat{D}_{n,l}$'s for problem (\ref{eqn:problem 3})} \vspace{0.2cm}
 \hrule
\vspace{0.2cm}

\begin{itemize}
\item[1.] Initialize $\alpha_n^{{\rm min}}=0$, $\alpha_n^{{\rm max}}=1$, $\forall n$;
\item[2.] For $n=1:N$, repeat
\begin{itemize}
\item[a.] Set $\alpha_n=\frac{\alpha_n^{{\rm min}}+\alpha_n^{{\rm max}}}{2}$;
\item[b.] Substitute $\alpha_n$ into (\ref{eqn:feasible}). If $\hat{D}_{n,l}$'s, $\forall l$, satisfy $2B\sum_{l=1}^L\hat{D}_{n,l}\leq \bar{T}_n$, set $\alpha_n^{{\rm max}}=\alpha_n$; otherwise, set $\alpha_n^{{\rm min}}=\alpha_n$;
\end{itemize}
\item[3.] Until $\alpha_n^{{\rm max}}-\alpha_n^{{\rm min}}<\varepsilon$, $\forall n$, where $\varepsilon>0$ is a small constraint to control the accuracy of the algorithm.
\end{itemize}

\vspace{0.2cm} \hrule \label{table4}
\end{center}
\end{table}

{\it 3. Overall Algorithm for Problem (\ref{eqn:problem 11})}

After problems (\ref{eqn:problem 2}) and (\ref{eqn:problem 3}) are solved by Algorithm \ref{table1} and Algorithms \ref{table2} and \ref{table4}, respectively, we are ready to propose the overall algorithm for problem (\ref{eqn:problem 11}) with given $\mv{V}_n=\bar{\mv{U}}_n$, $\forall n$, based on the alternating optimization technique, which is summarized in Table \ref{table3}. The algorithm starts with equal quantization bits allocation for all the spatial filter output dimensions at each RRH, i.e., $D_{n,l}=\lfloor \bar{T}_n/(2BL) \rfloor$, $\forall n,l$. Note that the proposed algorithm terminates in either of the following two cases: 1. $0\leq \gamma^{(i)}-\gamma^{(i-1)}\leq \varepsilon$, i.e., the minimum SINR of users cannot be improved above the positive threshold; and 2. $\gamma^{(i)}-\gamma^{(i-1)}<0$, i.e., the minimum SINR obtained at the current iteration is even reduced compared with that at the previous iteration. The second case may occur since problem (\ref{eqn:problem 3}) is generally not globally solved by Algorithms \ref{table2} and \ref{table4} due to the integer constraints in (\ref{eqn:integer 3}).

\begin{table}[htp]
\begin{center}
\caption{\textbf{Algorithm \ref{table3}}: Algorithm for Problem (\ref{eqn:problem 11}) with given $\mv{V}_n=\bar{\mv{U}}_n$} \vspace{0.2cm}
 \hrule
\vspace{0.2cm}
\begin{itemize}
\item[1.] Initialize: Set $D_{n,l}^{(0)}=\lfloor \bar{T}_n/(2BL) \rfloor$, $\forall n,l$, $\gamma^{(0)}=0$, and $i=0$;
\item[2.] Repeat
\begin{itemize}
\item[a.] $i=i+1$;
\item[b.] Obtain $\mv{p}^{(i)}$ and $\{\mv{w}_k^{(i)}\}$ by solving problem (\ref{eqn:problem 2}) with $\bar{D}_{n,l}=D_{n,l}^{(i-1)}$, $\forall n,l$, using Algorithm \ref{table1};
\item[c.] Obtain $\{\mv{D}_n^{(i)}\}$ by solving problem (\ref{eqn:problem 3}) with $\bar{\mv{p}}=\mv{p}^{(i)}$ and $\bar{\mv{w}}_k=\mv{w}_k^{(i)}$, $\forall k$, using Algorithms \ref{table2} and \ref{table4};
\end{itemize}
\item[3.] Until $\gamma^{(i)}-\gamma^{(i-1)}\leq \varepsilon$, where $\gamma^{(i)}$ denotes the objective value of problem (\ref{eqn:problem 11}) achieved by $\mv{p}^{(i)}$, $\{\mv{w}_k^{(i)}\}$ and $\{\mv{D}_n^{(i)}\}$, and $\varepsilon>0$ is a small constraint to control the accuracy of the algorithm.
\end{itemize}
\vspace{0.2cm} \hrule \label{table3}
\end{center}
\end{table}

\section{Numerical Results}\label{sec:Numerical Results}
In this section, we provide numerical results to verify our results. In the following numerical examples, the bandwidth of the wireless link is $B=10$MHz, while the path loss model of the wireless channel is given as $30.6+36.7\log_{10}(d)$ dB, where $d$ (in meter) denotes the distance between the user and the RRH. The transmit power constraint for each user is $\bar{P}_k=23$dBm, $\forall k$. The power spectral density of the background noise at each RRH is assumed to be $-169$dBm/Hz, and the noise figure due to the receiver processing is $7$dB. Furthermore, it is assumed that all the RRHs possess the identical fronthaul capacity, i.e., $\bar{T}_n=T$, $\forall n$.

\subsection{Effectiveness of Distributed Spatial Filtering at RRHs}\label{sec:Performance Gain due to Beamforming at RRHs}
First, we illustrate the effectiveness of the proposed RRHs' distributed spatial filtering via SCF. In this example, there are $N=4$ RRHs and $K=8$ users randomly distributed in a circle area of radius $500$m. Besides the proposed spatial filter design given in (\ref{eqn:beamforming at RRH}) at each RRH, we also consider the following schemes for performance benchmark: 1. Matched filtering, i.e., $\mv{V}_n=\mv{H}_n^H$, $\forall n$; 2. Zero-forcing (ZF) filtering, i.e., $\mv{V}_n=(\mv{H}_n^H\mv{H}_n)^{-1}\mv{H}_n^H$, $\forall n$; and 3. Without spatial filtering, i.e., $\mv{V}_n=\mv{I}$, $\forall n$. Note that for the matched and ZF filtering, like the proposed design, the signal dimension is reduced from $M$ to $K$ at each RRH, since in this example we set $K<M$.

\begin{figure}
\begin{center}
\scalebox{0.6}{\includegraphics*{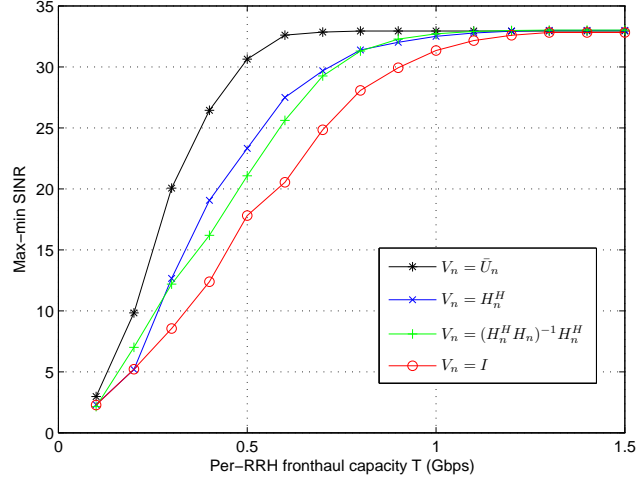}}
\end{center}
\caption{Performance comparison of different distributed spatial filtering designs at RRH versus per-RRH fronthaul capacity.}\label{fig4}
\end{figure}

Fig. \ref{fig4} shows the maximized minimum (max-min) SINR of all the users achieved by Algorithm \ref{table3} with different spatial filtering designs at the RRHs, where each RRH is equipped with $M=10$ antennas and the per-RRH fronthaul capacity $T$ varies from $T=0.1$Gbps to $T=1.5$Gbps. It is observed that the max-min SINR achieved with the proposed spatial filtering in (\ref{eqn:beamforming at RRH}) based on the EVD of the covariance matrix $\mv{S}_n$ (or SVD of $\mv{H}_n$) is higher than that achieved by the other benchmark schemes for all values of fronthaul capacities. This observation also holds in other simulation setups, which are omitted due to the space limitation. The performance gain is explained as follows. For matched filtering, the dimension of the spatial filter output $\bar{\mv{y}}_n$ given in (\ref{eqn:beamforming and quantize}) is reduced from $M$ to $L=K$, but its elements are still correlated at each RRH $n$, which makes the subsequent scalar quantization ineffective. On the other hand, with ZF filtering, although the signal dimension is also reduced and the elements of $\bar{\mv{y}}_n$ are decorrelated in general, the noise effect becomes more severe after filtering and quantization, which degrades the joint decoding SINR at the BBU. Furthermore, with $\mv{V}_n=\mv{I}$, $\forall n$, our proposed SCF scheme reduces to the antenna selection based quantization where the received signals from some antennas are not forwarded to the BBU if the quantization bits allocation obtained from Algorithm \ref{table3} yields $D_{n,m}=0$ for antenna $m$ at RRH $n$. With limited per-RRH fonthaul capacity, this scheme is not effective as substantial user signal information is discarded at each RRH due to antenna selection, even with optimized quantization bits allocation. Last, it is observed that as the per-RRH fronthaul capacity increases to $1.5$Gbps, the max-min SINRs achieved with all spatial filter solutions converge to the same value. This is because when the fronthaul capacity is large enough, the quantization error becomes negligible as compared to the receiver noise at each RRH.

\begin{figure}
\begin{center}
\scalebox{0.6}{\includegraphics*{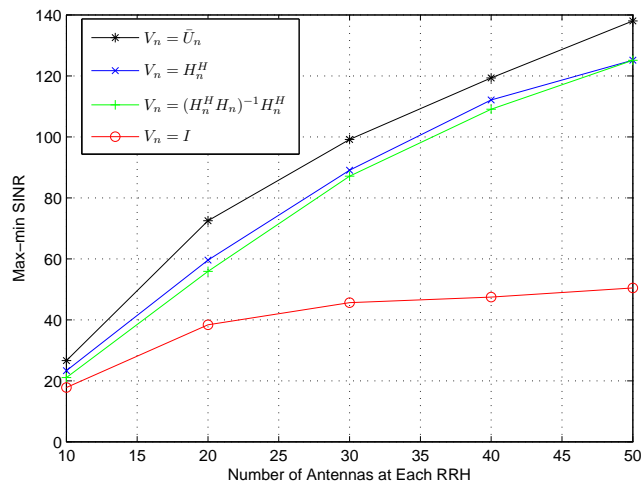}}
\end{center}
\caption{Performance comparison of different spatial filter design at RRH versus per-RRH number of antennas.}\label{fig9}
\end{figure}

Fig. \ref{fig9} shows the max-min SINR of all the users achieved by Algorithm \ref{table3} with different spatial filtering designs at the RRHs versus the number of antennas at each RRH, where the per-RRH fronthaul capacity is fixed as $T=0.5$Gbps. It is observed that the proposed filter outperforms both matched filter and ZF filter for all values of $M$. Moreover, it is also observed that as $M$ increases, the performance gap between the schemes with versus without filtering is enlarged. This is because when $M$ is large, we need to compress the signals received at each RRH more significantly by spatial filtering before scalar quantization, i.e., with $L\ll M$.

\subsection{Performance Gain of Joint Wireless and Fronthaul Resource Allocation}\label{sec:Performance Gain due to Joint Wireless and Fronthaul Resource Allocation}
Next, we show the performance gain of joint optimization for wireless and fronthaul resource allocation. In addition to the proposed Algorithm \ref{table3}, we consider the following three benchmark schemes for performance comparison.

\begin{itemize}
\item{\bf Benchmark Scheme 1: Optimized quantization bits allocation without power control.} In this scheme, all the users transmit at their maximum power, i.e., $p_k=\bar{P}_k$, $\forall k$, while the BBU iteratively updates its MMSE based receive beamforming as given in (\ref{eqn:MMSE}) and RRHs' quantization bits allocation via Algorithms \ref{table2} and \ref{table4}.

\item{\bf Benchmark Scheme 2: Optimized power control with equal quantization bits allocation.} In this scheme, each RRH equally allocates its fronthaul capacity to all the $L$ dimensions of the output signal after spatial filtering, i.e., $D_{n,l}=\bar{T}_n/(2BL)$, $\forall l,n$. Then, the BBU computes the users' power allocation $\mv{p}$ as well as its beamforming solution $\mv{w}_k$'s using Algorithm \ref{table1}.

\item{\bf Benchmark Scheme 3: Equal quantization bits allocation without power control.} In this scheme, all the users transmit at their maximum power, and each RRH equally allocates its fronthaul capacity to all the $L$ dimensions of the filter output signal, while the BBU computes the resulting MMSE receive beamforming.
\end{itemize}

\begin{figure}
\begin{center}
\scalebox{0.6}{\includegraphics*{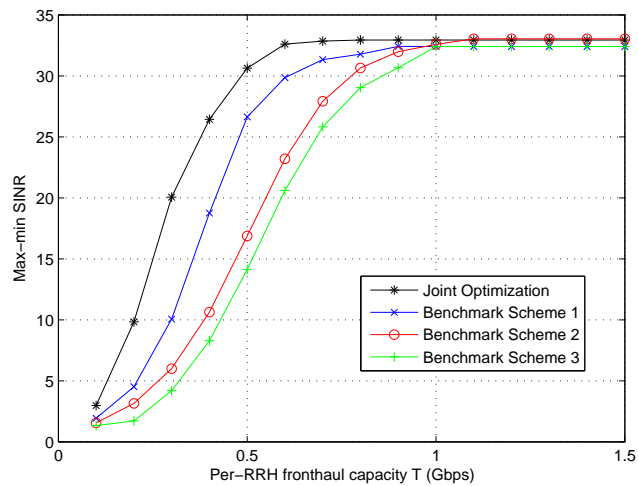}}
\end{center}
\caption{Performance comparison of different resource allocation schemes in the case of small number of users.}\label{fig5}
\end{figure}

Fig. \ref{fig5} shows the performance comparison of different schemes versus per-RRH fronthaul capacity $T$ under the same setup as for Fig. \ref{fig4}. It is observed that the proposed joint optimization of wireless and fronthaul resource allocation, i.e., Algorithm \ref{table3}, achieves higher max-min SINR over Benchmark Schemes 1--3 with separate resource allocation. Moreover, it is observed that under small and moderate per-RRH fronthaul capacity, Benchmark Scheme 1 performs closer to Algorithm \ref{table3}, as compared to Benchmark Schemes 2 and 3, which equally quantize all the signal dimensions at each RRH. In other words, in this example, most of the joint resource allocation gain is due to joint quantization bits allocation at RRHs. Note that in this example, the total number of antennas from all RRHs is $MN=40$, which is much larger than that of the users $K=8$. Since there are sufficient spatial dimensions to separate the users' signals, power control becomes less effective than quantization bits allocation. Table \ref{table5} shows the number of quantized dimensions with $D_{n,l}>0$, $l=1,\cdots,L$, at each RRH $n$ when the per-RRH fronthaul capacity is $T=0.1,0.5,1$Gbps, respectively. It is observed that when $T=0.1$Gbps, each RRH should quantize $3$ out of $L=\min(M,K)=8$ signal dimensions. As the fronthaul capacity increases, more signal dimensions can be quantized. Particularly, when the fronthaul capacity is sufficiently large, i.e., $T=1$Gbps, all the $8$ signal dimensions after spatial filtering should be quantized at all RRHs.

\begin{table}
\caption{Number of Quantized Dimensions at Each RRH} \label{table5}
\begin{center}
\begin{tabular}{c|c|c|c|c}
\hline RRH Index & 1 & 2 & 3 & 4  \\
\hline
No. of Quantized Dimensions (T=0.1Gbps) & 3 & 3 & 3 & 3 \\
\hline
No. of Quantized Dimensions (T=0.5Gbps) & 6 & 7 & 5 & 7 \\
\hline
No. of Quantized Dimensions (T=1.0Gbps) & 8 & 8 & 8 & 8 \\
\hline
\end{tabular}
\end{center}
\end{table}

Fig. \ref{fig6} shows the performance comparison for the case of large number of users with $K=20$. It is observed that Algorithm \ref{table3} still outperforms Benchmark Schemes 1--3. However, different from the case with smaller number of users ($K=8$) shown in Fig. \ref{fig5}, it is observed that with $K=20$, Benchmark Scheme 2 performs better than Benchmark Schemes 1 and 3. This is because with more users but fixed total number of spatial degrees of freedom $MN=40$, the inter-user interference is more severe and thus power control becomes more effective.

\begin{figure}
\begin{center}
\scalebox{0.6}{\includegraphics*{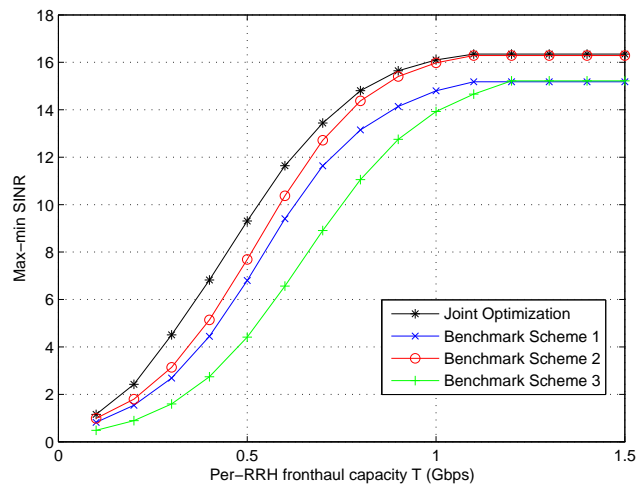}}
\end{center}
\caption{Performance comparison of different resource allocation schemes in the case of large number of users.}\label{fig6}
\end{figure}

\subsection{Multi-Antenna C-RAN versus Massive MIMO}\label{sec:C-RAN or Massive MIMO}
As discussed in Section \ref{sec:Introduction}, besides multi-antenna C-RAN considered in this paper, massive MIMO and single-antenna C-RAN are two promising techniques proposed for 5G wireless networks. Therefore, an interesting as well as important question we seek to address in this subsection is as follows: \emph {given a total amount of antennas to be deployed, should we equip them in one single BS, i.e., massive MIMO, or distribute them over a given area by connecting to the BBU via finite-rate fronthaul links, i.e., C-RAN? Moreover, if C-RAN is preferred, what is the optimal antenna deployment solution, i.e., single-antenna C-RAN versus multi-antenna C-RAN?} Intuitively, if more single-antenna RRHs are deployed in the network, with higher probability each user can be served by one or more nearby RRHs with strong channel conditions. However, with multi-antenna RRHs, we can efficiently perform SCF at each RRH to better utilize the limited fronthaul capacity given for each RRH. In the following, we provide a case of study to show the advantage of multi-antenna C-RAN over its two extreme counterparts: massive MIMO with all the antennas deployed at one BS and single-antenna C-RAN with only one antenna at each RRH, i.e., $M=1$.

To make a fair comparison, we assume that there are in total $\bar{M}$ antennas to serve $K$ users in a given area. Specifically, for the massive MIMO system, we assume that there is only one BS which is equipped with all $\bar{M}$ antennas, while for the C-RAN, we assume that there are $N$, $1\leq N \leq \bar{M}$, RRHs each equipped with $\bar{M}/N$ antennas.\footnote{In the case that $\bar{M}/N$ is not an integer, we can simply let some RRHs possess $\lfloor \bar{M}/N \rfloor$ antennas, while the others possess $\lceil \bar{M}/N \rceil$ antennas such that the total number of antennas is equal to $\bar{M}$.} Moreover, we assume there is a total fronthaul capacity for all RRHs in C-RAN, denoted by $\bar{T}$, which is a constant regardless of the number of RRHs $N$. Note that this is a valid assumption since in practice the signals from different RRHs in the same area are usually first multiplexed at a local hub, which then forwards the signals to the BBU via a fronthaul link with capacity $\bar{T}$. For simplicity, we set $\bar{T}_n=\bar{T}/N$ for all the $N$ RRHs in the case of C-RAN.

\begin{figure}
\begin{center}
\scalebox{0.6}{\includegraphics*{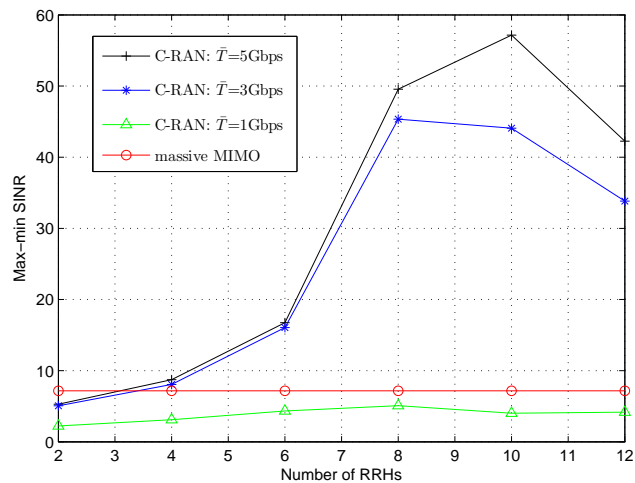}}
\end{center}
\caption{Performance comparison between multi-antenna C-RAN and massive MIMO.}\label{fig7}
\end{figure}

Fig. \ref{fig7} shows the max-min SINR performance comparison between massive MIMO versus C-RAN with different number of RRHs and fronthaul sum-capacity. The total number of antennas is $\bar{M}=50$. Moreover, there are $K=20$ users randomly located in a circle area of radius $700$m. For the massive MIMO, the BS is located in the center of the circle, while for the C-RAN, the RRHs are randomly located in the circle. Note that since the BS in massive MIMO system directly decodes the user messages, the optimal power allocation for users and decoding beamforming vectors at the BS can be obtained by Algorithm \ref{table1} with $\bar{D}_{n,l}=\infty$, $\forall n,l$. It is observed from Fig. \ref{fig7} that if the fronthaul sum-capacity is sufficiently large, i.e., $\bar{T}=3$Gbps or $5$Gbps, the max-min SINR in C-RAN is larger than that in massive MIMO when the number of RRHs $N> 4$. This is because the densification gain due to larger number of RRHs dominates the performance of C-RAN. However, if the fronthaul sum-capacity is limited, i.e., $\bar{T}=1$Gbps, the performance of C-RAN with any $N$ is inferior to that of massive MIMO, since quantization errors dominate its performance. Moreover, for the case of single-antenna C-RAN with $N=50$ single-antenna RRHs (which is not shown in Fig. \ref{fig7} for brevity), the achieved max-min SINRs for the cases of $\bar{T}=1,3,5$Gbps are $2.25$, $10.23$ and $22.08$, respectively, which are significantly lower than the maximum SINRs achieved in multi-antenna C-RAN with the corresponding optimal number of RRHs shown in Fig. \ref{fig7}. This is because in the case of single-antenna C-RAN, the received signals at nearby RRHs are highly correlated, and as a result independent scalar quantization at these RRHs without spatial filtering is inefficient in utilization of the given fronthaul sum-capacity. Last, it is observed that the achieved max-min SINR in multi-antenna C-RAN first increases with the number of RRHs, but then decreases after a certain optimal number of RRHs for all three cases of $\bar{T}=1,3,5$Gbps. This validates the effectiveness of multi-antenna C-RAN by more flexibly balancing between the performance and fronthaul trade-off. As $\bar{T}$ increases, it is observed that more RRHs should be deployed in C-RAN to exploit better channels from all users.

\section{Conclusions}\label{sec:Conclusions}
This paper considers a flexible antenna deployment design for C-RAN termed multi-antenna C-RAN and proposes a new ``spatial-compression-and-forward (SCF)'' scheme for efficient and low-complexity processing at each RRH in the uplink multiuser communication. With the proposed distributed spatial filters at RRHs, a joint optimization across the wireless transmission, the fronthaul quantization and the decoding at the BBU is performed to maximize the minimum SINR of all the users. Our results show that the proposed SCF scheme with joint resource allocation achieves significant performance gains over the conventional ``quantize-and-forward'' based single-antenna C-RAN as well as massive MIMO, thanks to the flexibly optimized antenna distribution in multi-antenna C-RAN.

\begin{appendix}

\subsection{Proof of Lemma \ref{lemma1}}\label{appendix1}
First, since $\left(\sum\limits_{j\neq k}p_j\tilde{\mv{h}}_{(j)}\tilde{\mv{h}}_{(j)}^H+\sigma^2\mv{I}+\mv{Q}(\mv{p},\{\bar{\mv{U}}_n,\bar{\mv{D}}_n\})\right)^{-1}\succ \mv{0}$, $\forall k$, we have $\mv{I}(\mv{p})>\mv{0}$ if $\mv{p}>\mv{0}$. The first property in Lemma \ref{lemma1} is proved.

Next, if $\mv{p}\geq \mv{p}'$, then we have
\begin{align}
& \sum\limits_{j\neq k}p_j\tilde{\mv{h}}_{(j)}\tilde{\mv{h}}_{(j)}^H+\sigma^2\mv{I}+\mv{Q}(\mv{p},\{\bar{\mv{U}}_n,\bar{\mv{D}}_n\})\nonumber \\ \succeq & \sum\limits_{j\neq k}p_j'\tilde{\mv{h}}_{(j)}\tilde{\mv{h}}_{(j)}^H+\sigma^2\mv{I}+\mv{Q}(\mv{p}',\{\bar{\mv{U}}_n,\bar{\mv{D}}_n\}), ~~~ \forall k.
\end{align}

\begin{lemma}\label{lemma2}{\cite[Corollary 7.7.4]{Horn85}}
If $\mv{A}\succ \mv{0}$, $\mv{B}\succ \mv{0}$, and $\mv{A}-\mv{B} \succeq \mv{0}$, then $\mv{A}^{-1}-\mv{B}^{-1}\preceq \mv{0}$.
\end{lemma}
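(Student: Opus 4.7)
The plan is to reduce the inequality to a comparison against the identity matrix via a congruence by $\mv{B}^{-1/2}$. Since $\mv{B}\succ\mv{0}$, the Hermitian positive-definite square root $\mv{B}^{1/2}$ is well defined and invertible, so $\mv{B}^{-1/2}$ exists. First I would conjugate the given inequality $\mv{A}-\mv{B}\succeq\mv{0}$ by $\mv{B}^{-1/2}$: congruence by an invertible matrix preserves the PSD order (since $\mv{x}^{H}(\mv{A}-\mv{B})\mv{x}\geq 0$ for all $\mv{x}$ is equivalent to $(\mv{B}^{-1/2}\mv{x})^{H}(\mv{A}-\mv{B})(\mv{B}^{-1/2}\mv{x})\geq 0$ for all $\mv{x}$), yielding
\begin{equation}
\mv{C}\;:=\;\mv{B}^{-1/2}\mv{A}\mv{B}^{-1/2}\;\succeq\;\mv{I}.
\end{equation}
The matrix $\mv{C}$ is Hermitian, and $\mv{C}\succ\mv{0}$ follows from $\mv{A}\succ\mv{0}$ together with the invertibility of $\mv{B}^{-1/2}$, so $\mv{C}^{-1}$ exists and equals $\mv{B}^{1/2}\mv{A}^{-1}\mv{B}^{1/2}$.

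Next I would invert $\mv{C}$ in the Loewner sense. Because $\mv{C}$ is Hermitian with every eigenvalue $\lambda_i(\mv{C})\geq 1$, the spectral theorem gives $\lambda_i(\mv{C}^{-1})=1/\lambda_i(\mv{C})\leq 1$, hence $\mv{C}^{-1}\preceq\mv{I}$, i.e., $\mv{B}^{1/2}\mv{A}^{-1}\mv{B}^{1/2}\preceq\mv{I}$. A final congruence by $\mv{B}^{-1/2}$ on both sides then gives $\mv{A}^{-1}\preceq\mv{B}^{-1}$, which is the desired conclusion $\mv{A}^{-1}-\mv{B}^{-1}\preceq\mv{0}$.

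There is no substantive obstacle: the whole argument rests on two elementary facts, namely that congruence by an invertible matrix preserves the Loewner order, and that $t\mapsto 1/t$ is order-reversing on $(0,\infty)$ and therefore, via the functional calculus for Hermitian matrices, reverses the Loewner order on the positive-definite cone. The only point requiring a line of care is that $\mv{C}$ is positive definite so that taking $\mv{C}^{-1}$ is legitimate; this is precisely where the hypothesis $\mv{A}\succ\mv{0}$ (rather than merely $\mv{A}\succeq\mv{0}$) is used. As a sanity check, an alternative route is the variational identity $\mv{x}^{H}\mv{M}^{-1}\mv{x}=\max_{\mv{y}}\bigl(2\,\mathrm{Re}(\mv{y}^{H}\mv{x})-\mv{y}^{H}\mv{M}\mv{y}\bigr)$ valid for any $\mv{M}\succ\mv{0}$: since $\mv{y}^{H}\mv{A}\mv{y}\geq \mv{y}^{H}\mv{B}\mv{y}$ for all $\mv{y}$, the objective of the maximization for $\mv{A}^{-1}$ is pointwise dominated by that for $\mv{B}^{-1}$, giving $\mv{x}^{H}\mv{A}^{-1}\mv{x}\leq \mv{x}^{H}\mv{B}^{-1}\mv{x}$ for every $\mv{x}$.
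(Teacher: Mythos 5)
Your proof is correct, and the paper itself offers no proof of this lemma --- it simply cites Corollary 7.7.4 of Horn and Johnson, whose standard argument is precisely your congruence-by-$\mv{B}^{-1/2}$ reduction to $\mv{C}\succeq\mv{I}$ followed by eigenvalue inversion. Both that route and your variational sanity check are sound, so there is nothing to add.
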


According to Lemma \ref{lemma2}, it follows that
\begin{align}
& \left(\sum\limits_{j\neq k}p_j\tilde{\mv{h}}_{(j)}\tilde{\mv{h}}_{(j)}^H+\sigma^2\mv{I}+\mv{Q}(\mv{p},\{\bar{\mv{U}}_n,\bar{\mv{D}}_n\})\right)^{-1}\nonumber \\ \preceq &  \left(\sum\limits_{j\neq k}p_j'\tilde{\mv{h}}_{(j)}\tilde{\mv{h}}_{(j)}^H+\sigma^2\mv{I}+\mv{Q}(\mv{p}',\{\bar{\mv{U}}_n,\bar{\mv{D}}_n\})\right)^{-1}, ~~~ \forall k.
\end{align}As a result, we have $\mv{I}(\mv{p})\geq \mv{I}(\mv{p}')$ if $\mv{p}\geq \mv{p}'$. The second property in Lemma \ref{lemma1} is thus proved.

Last, if $a>1$, then we have
\begin{align}
& a\left(\sum\limits_{j\neq k}p_j\tilde{\mv{h}}_{(j)}\tilde{\mv{h}}_{(j)}^H+\sigma^2\mv{I}+\mv{Q}(\mv{p},\{\bar{\mv{U}}_n,\bar{\mv{D}}_n\})\right)\nonumber \\ \succ & \sum\limits_{j\neq k}ap_j\tilde{\mv{h}}_{(j)}\tilde{\mv{h}}_{(j)}^H+\sigma^2\mv{I}+\mv{Q}(a\mv{p},\{\bar{\mv{U}}_n,\bar{\mv{D}}_n\}), ~~~ \forall k.
\end{align}According to Lemma \ref{lemma1}, we have
\begin{align}
& \frac{1}{a}\left(\sum\limits_{j\neq k}p_j\tilde{\mv{h}}_{(j)}\tilde{\mv{h}}_{(j)}^H+\sigma^2\mv{I}+\mv{Q}(\mv{p},\{\bar{\mv{U}}_n,\bar{\mv{D}}_n\})\right)^{-1}\nonumber \\ \prec & \left(\sum\limits_{j\neq k}ap_j\tilde{\mv{h}}_{(j)}\tilde{\mv{h}}_{(j)}^H+\sigma^2\mv{I}+\mv{Q}(a\mv{p},\{\bar{\mv{U}}_n,\bar{\mv{D}}_n\})\right)^{-1}, ~~~ \forall k.
\end{align}As a result, it follows that $\forall a>1$, $a\mv{I}(\mv{p})>\mv{I}(a\mv{p})$. The third property of Lemma \ref{lemma1} is thus proved.

To summarize, $\mv{I}(\mv{p})$ given in (\ref{eqn:interference function}) is a standard interference function. Lemma \ref{lemma1} is thus proved.

\subsection{Proof of Corollary \ref{corollary3}}\label{appendix5}
If the initial point of the fixed-point method is $\mv{p}^{(0)}=\mv{0}$, then we have $\mv{p}^{(1)}=\mv{I}(\mv{p}^{(0)})>\mv{0}=\mv{p}^{(0)}$. Moreover, according to the second property of Lemma \ref{lemma1}, if $\mv{p}^{(i)}\geq\mv{p}^{(i-1)}$, then $\mv{p}^{(i+1)}=\mv{I}(\mv{p}^{(i)})\geq\mv{I}(\mv{p}^{(i-1)})=\mv{p}^{(i)}$. As a result, with the initial point $\mv{p}^{(0)}=\mv{0}$, the resulted power solution increases after each iteration, i.e., $\mv{p}^{(0)}<\mv{p}^{(1)}\leq \mv{p}^{(2)}\leq \cdots$. Suppose that at the $i$th iteration, the above procedure converges to a finite power solution $\mv{p}^{(i)}$. Then we have $\mv{p}^{(i+1)}=\mv{p}^{(i)}=\mv{I}(\mv{p}^{(i)})$. As a result, $\mv{p}^{(i)}$ can satisfy the SINR constraints (\ref{eqn:SINR balancing 7}), which contradicts to the fact that $\bar{\gamma}$ cannot be achieved by all the users. To summarize, the power solution by the fixed-point method with $\mv{p}^{(0)}=\mv{0}$ will increase to an infinity power allocation. Corollary \ref{corollary3} is thus proved.

\subsection{Proof of Corollary \ref{corollary2}}\label{appendix2}
Let $\mv{p}^{(0)}$ denote a feasible solution to problem (\ref{eqn:problem 7}). Since it satisfies (\ref{eqn:SINR balancing 7}), we have $\mv{p}^{(0)}\geq \mv{I}(\mv{p}^{(0)})=\mv{p}^{(1)}$. In the following, we show that $\mv{p}^{(i+1)}\leq \mv{p}^{(i)}$ holds $\forall i\geq 1$ by induction. Suppose $\mv{p}^{(i)}\leq \mv{p}^{(i-1)}$. Then we have $\mv{p}^{(i+1)}=\mv{I}(\mv{p}^{(i)})\overset{(a)}{\leq}\mv{I}(\mv{p}^{(i-1)})=\mv{p}^{(i)}$, where $(a)$ is due to the second property in Lemma \ref{lemma1}. Since we have $\mv{p}^{(1)}\leq \mv{p}^{(0)}$, we thus have $\mv{p}^{(i+1)}\leq \mv{p}^{(i)}$ holds $\forall i\geq 1$. In other words, given any feasible power solution $\mv{p}'$ as the initial point, a monotonic convergence can be guaranteed by the fixed-point method, i.e., $\mv{p}'\geq \mv{p}^\ast$. Corollary \ref{corollary2} is thus proved.

\end{appendix}

\end{document}